\newlist{enumerate*}{enumerate*}{1}
\setlist[enumerate*]{label=(\roman*)}
\tikzset{
  dot/.style={
          circle,          %
          fill,         %
          inner sep=1pt     %
  },
  myarrow/.style={
    -Latex
  },
}
\newcommand\drawline[4]{%
\draw (#3) node[left] (#2){#1} edge +(#4,0)
}
\newcommand\labelaction[3]{%
\draw (#1) node[#2, text width=2.6cm, align=center, outer sep=1mm]{#3}
}
\Crefname{lstlisting}{Figure}{Figures}
\crefname{lstlisting}{Figure}{Figures}
\Crefname{definition}{Definition}{Definitions}
\crefname{definition}{Definition}{Definitions}
\Crefname{theorem}{Definition}{Definitions}
\crefname{theorem}{Definition}{Definitions}
\lstdefinelanguage{myscala}{%
  language     = Scala,
  morekeywords = {then, enum, given},
}
  \let\c@figure\c@lstlisting
  \let\ftype@lstlisting\ftype@figure %
\definecolor{RED}{rgb}{1,0,0}\definecolor{BLUE}{rgb}{0,0,1} %
\begin{document}

\title{PRDTs: Composable Design and Verification of Consensus Protocols using Replicated Data Types} 
\author{Julian Haas}
\orcid{0000-0001-9959-5099}
\affiliation{%
  \institution{Technische Universität Darmstadt}
  \country{Germany}
}

\author{Ragnar Mogk}
\orcid{0000-0003-4583-1791}
\affiliation{%
  \institution{Technische Universität Darmstadt}
  \country{Germany}
}

\author{Annette Bieniusa}
\orcid{0000-0002-1654-6118}
\affiliation{%
  \institution{Rheinland-Pfälzische Technische Universität
Kaiserslautern-Landau}
  \country{Germany}
}

\author{Mira Mezini}
\orcid{0000-0001-6563-7537}
\affiliation{%
  \institution{Technische Universität Darmstadt}
  \country{Germany}
}
\affiliation{%
  \institution{National Research Center for Applied
Cybersecurity (ATHENE)}
  \country{Germany}
}
\affiliation{%
  \institution{Hessian Center for Artificial Intelligence (hessian.AI)}
  \country{Germany}
}

\begin{abstract}
Consensus protocols are fundamental in distributed systems as they
enable services with strong consistency properties. 
However, designing protocols optimized for specific use-cases under certain system assumptions is typically an error-prone process requiring expert knowledge. Furthermore, while most recent optimized protocols are variations of well-known algorithms like Paxos or Raft, they often necessitate complete re-implementations, potentially introducing new bugs and complicating the application of existing verification results.
This approach impedes application-specific consistency protocols that can easily be amended or swapped out, depending on the given application and deployment scenario.

We propose \emph{Protocol Replicated Data Types} (PRDTs), a novel programming model for implementing consensus protocols using replicated data types (RDTs). Inspired by the knowledge-based view of consensus, PRDTs employ RDTs to monotonically accumulate \emph{knowledge} until \emph{agreement} is reached. This approach allows for implementations focusing on high-level protocol logic that abstracts away network details and facilitates automated verification.
Moreover, by applying existing algebraic composition techniques for RDTs in the PRDT context, we enable composable protocol building-blocks for implementing complex protocols.
We present a formal model of our approach and implement a proof procedure that allows automated reasoning about the consensus safety of concrete PRDT implementations.
Additionally, we demonstrate the applicability of our model in verified PRDT-based implementations of existing consensus protocols, and report empirical performance evaluation results. Our findings indicate that the PRDT approach offers enhanced flexibility and composability in protocol design, facilitates reasoning about correctness, 
and is suited for real-world adoption without intrinsic performance drawbacks.
\end{abstract}

\maketitle

\newcommand\process[2]{(i_{#1}, s_{#2})}
\newcommand\decision[1]{d(#1)}
\newcommand\processes{P}
\newcommand\action{\textit{ProtocolAction}}

\section{Introduction}\label{intro}

In distributed systems, consensus is the foundation of strong consistency, creating order in a distributed environment and
ensuring the correctness and reliability of 
crucial
distributed services.
However, establishing consensus often presents a performance bottleneck~\cite{Brewer2012}, 
which has led to an active area of research developing custom protocols tailored to specific 
scenarios and application requirements~\cite{jhaDerechoFast2019, huntZooKeeperWaitfree2010, kotlaZyzzyvaSpeculative2010, maiyyaUnifyingConsensus2019, maoMenciusBuilding2008, nawabDPaxosManaging2018, ongaroSearchUnderstandable2014, vanrenessePaxosMade2015, weiVCorfuCloudScale2017}.

Developing correct custom (application-specific) consensus protocols is challenging.
Existing programming and verification approaches for consensus
force designers to reason about networks of concurrently operating processes that might fail or suffer from network partitions~\cite{wilcoxVerdiFramework2015,mcmillanDeductiveVerification2018, mcmillanIvyMultimodal2020,hawblitzelIronFleetProving2015}.
Due to the inherent complexity of consensus, new designs are often 
variations of well-established and well-studied protocols~\cite{padonPaxosMade2017, wangParallelsPaxos2019, chandraPaxosMade2007, howardRaftRefloated2015, lamportGeneralizedConsensus2005, vanrenessePaxosMade2015}.
Despite being structurally similar to established designs, such variations are usually realized as distinct, monolithic implementations with subtle differences in system assumptions and use cases.
This hinders the transfer of insights between protocols and increases the risk of introducing new bugs.
Meanwhile, research on the \enquote{building blocks} that constitute a consensus protocol has only recently received attention~\cite{andersenProtocolCombinators2021, howardGeneralisedSolution2019, howardPaxosVs2020, guptaChemistryAgreement2023,jaberQuickSilverModeling2021}.

To improve performance and simplify reasoning, 
alternative knowledge-based approaches to the consensus problem
have recently gained popularity.
They follow the framing of 
\emph{agreement} as a problem of knowledge by \citet{halpernKnowledgeCommon1990}:
To reach agreement (and thus consensus), processes have to establish \emph{eventual common knowledge},
denoting facts that all processes will know eventually, and that remain valid thereafter.
Through the exchange of facts, processes can try to lift local knowledge
to eventual common knowledge in a monotonic process.
The knowledge-based model improves performance and scalability of consensus protocols~\cite{jhaDerechoFast2019, chuOptimizingDistributed2024}, while simplifying their formal safety verification~\cite{howardGeneralisedSolution2019, nagasamudramVerifyingImplementation2024,lewchenkoBoltOnStrong2025}.

Motivated by these results, our work introduces \emph{protocol replicated data types (PRDTs)}: a novel programming model that uses algebraic convergent replicated data types to enable developers to implement, compose, and customize consensus
protocols for diverse systems and applications.
PRDTs store the local knowledge of each process in a replicated data type, which ensures that the result of every local action becomes eventual common knowledge.
Given this, the remaining responsibility of a PRDT implementation is to ensure that \emph{local} actions can never create new knowledge that would invalidate existing decisions.
Limiting the reasoning to local actions drastically simplifies reasoning about the correctness of a consensus implementation, as was recently demonstrated by \citet{lewchenkoBoltOnStrong2025} in a proof theory that was developed in parallel to our work on PRDTs.
In this paper, we adapt their proof theory to our model and are able to show that PRDTs guarantee two of their proof obligations by design, further reducing the proof burden.

PRDTs build upon algebraic replicated data types (ARDTs)~\cite{kuessnerAlgebraicReplicatedData2023}
and introduce specialized abstractions for systematically building safe consensus protocols on top of the convergent replicated state captured by the ARDT framework.
ARDTs allow algebraic composition of existing replicated data types, and we demonstrate that this directly extends to PRDTs, fostering \emph{composable and parametric consensus protocols}: developers can modularly assemble consensus protocols from provably correct components.
This modular programming model encourages sharing and reuse of building blocks across protocols, reducing the risk of introducing subtle errors when adapting consensus mechanisms to new domains, while supporting \emph{compositional reasoning and verification}.
As a case study, we provide \emph{several executable implementations} of distributed protocols from the Paxos family that we derive by composing simpler components, showcasing how PRDTs enable modular design and integrate with existing language features such as parametric algebraic data types, traits, and type classes (\cref{implementing-a-real-protocol}).

Finally, we conduct a \emph{performance evaluation}\footnote{\label{artifact}We plan to submit our performance evaluation and our PRDT implementations, verified with Stainless, for artifact evaluation. All of our implementations are FOSS under Apache-2.0 licence and will be made available with the non-anonymized version of this work.} to empirically assess the practical feasibility of the PRDT model and the potential overhead they might introduce compared to existing message-based consensus protocol implementations (\cref{sec:evaluation}).
To this end, we implement a strongly consistent key-value store based on our PRDT implementation of Multi-Paxos and benchmark its performance in both local and geo-replicated settings.
This evaluation provides a proof-of-principle that constructing practical systems using PRDT-based protocol implementations is feasible. Our prototype achieves parity with etcd~\cite{etcd} and sometimes outperforms it in scenarios where the servers are colocated in the same data centre.
Performance differences are visible in geo-replicated setups where etcd profits from optimizations in its key-value store layer (e.g., request pipelining), that we did not implement in our prototypical implementation.

In summary, this paper makes the following contributions:

\newpage

\begin{itemize}
        \item A \textbf{dedicated programming model for knowledge-based compositional consensus protocols}, moving the state of the art from isolated solutions targeted at specific protocols %
        to a general model that systematically supports the creation, modular composition, and flexible customization of a broad range of knowledge-based protocols (\cref{sec:overview,sec:verification}).
       \item A compositional \textbf{automated verification procedure} for our programming model that allows for the verification of concrete PRDT implementations. We implement the procedure using Stainless~\cite{epfllaraStainlesFormal}, a verifier for Scala programs.
       Our verification process leverages the key features of PRDTs -- replicated state convergence and algebraic composability -- to reduce the burden of manual verification, simplifying and automating existing theories (\cref{sec:verification}).
        \item \textbf{Several executable implementations of  protocols from the Paxos family}\textsuperscript{\ref{artifact}} derived by composing simpler components, paving the way towards providing developers -- of system and application software alike -- with libraries of reusable provably safe protocol modules. (\cref{sec:implementation})
\end{itemize}

\section{Overview of the PRDT Programming Model}
\label{sec:overview}

A Protocol RDT (PRDT) consists of three components:
\begin{enumerate}
    \item A \emph{knowledge lattice} representing the protocol state, with each element representing a process's current knowledge about the system.
    \item A set of \emph{protocol actions} with corresponding \emph{preconditions}, which guard their execution.
    \item A \emph{decision function} that makes \emph{irrevocable} decisions based on the current knowledge state.
\end{enumerate}

We introduce these components using a simple voting protocol for illustration. 
In this protocol, three processes, $a$, $b$, and $c$, attempt to agree on a type of pet through a single round of voting. 
Each process can vote once, and the pet type that receives a majority of votes (two in this case) is considered the agreed-upon value.
\Cref{fig:voting-scenario} illustrates an example execution of the voting protocol. Here, both $a$ and $b$ vote for \emph{cat}. 
As soon as $b$ is notified about $a$'s choice, it can make a local decision. 
This decision then propagates to the other two processes once they receive $b$'s vote.
A concrete Scala implementation of the example voting protocol as a PRDT is given in \cref{fig:voting-rdt}.

\begin{figure}
\hspace{-1cm}
\centering%
  \begin{tikzpicture}[every node/.style={font=\itshape\tiny}, scale=0.5]%
    \input{figures/message-diagram-rdt-pet.tex}%
  \end{tikzpicture}%
\caption{Example run of the Voting protocol.}%
\label{fig:voting-scenario}%
\end{figure}

\cref{fig:value-lattice} visualizes the knowledge lattice and decisions for two distinct runs of the voting protocol that reach different decisions.
On the left and right sides, we depict all possible states in the knowledge lattice of the given run. 
Solid arrows depict the partial order of the knowledge lattice; 
they correspond to permissible system transitions induced by protocol actions or by \emph{merging} one state into another.
In the middle, we depict \emph{decisions}, the possible results of the decision function.
Dashed arrows show the result of the decision function for a given knowledge state. 
Decisions have a partial order, also visualized as solid arrows.

\begin{figure}
    \lstinputlisting[language=myscala, basicstyle=\scriptsize\ttfamily, xleftmargin=.1\textwidth, xrightmargin=.1\textwidth, escapeinside={//§}{\^^M}, caption={Voting as a PRDT.}, captionpos=b, label={fig:voting-rdt}]{listings/voting.scala}
\end{figure}

\begin{figure}
\begin{tikzpicture}[every node/.style={font=\itshape\tiny}, scale=1.5]
    \node[minimum width=1.3cm,, align=center] (A) at (1,-0.5) {undecided};
    \node[minimum width=1.3cm,, align=center] (B) at (0.5,1.0) {decided: cat};
    \node[minimum width=1.3cm,, align=center] (E) at (1.5,1.0) {decided: dog};

    \node[minimum width=1.3cm,, align=center] (F) at (-2,-1) {$\emptyset$};
    \node[minimum width=1.3cm,, align=center] (G) at (-1,0) {$\{a \mapsto \textit{cat}\}$};
    \node[minimum width=1.3cm,, align=center] (H) at (-2,0) {$\{b \mapsto \textit{cat}\}$};
    \node[minimum width=1.3cm,, align=center] (I) at (-3,0) {$\{c \mapsto \textit{dog}\}$};
    \node[minimum width=1.3cm,, align=center] (J) at (-1,1) {$\{a \mapsto \textit{cat},$ \\ $b \mapsto \textit{cat}\}$};
    \node[minimum width=1.3cm,, align=center] (b_cat_c_dog) at (-3,1) {$\{b \mapsto \textit{cat},$ \\ $c \mapsto \textit{dog}\}$}; %
    \node[minimum width=1.3cm,, align=center] (a_cat_c_dog) at (-2,1) {$\{a \mapsto \textit{cat},$ \\ $c \mapsto \textit{dog}\}$}; %
    \node[minimum width=1.3cm,, align=center] (K) at (-2,2) {$\{a \mapsto \textit{cat}, b \mapsto \textit{cat}, c \mapsto \textit{dog}\}$};

    \node[minimum width=1.3cm,, align=center] (L) at (4,-1) {$\emptyset$};
    \node[minimum width=1.3cm,, align=center] (M) at (3,0) {$\{a \mapsto \textit{dog}\}$};
    \node[minimum width=1.3cm,, align=center] (N) at (4,0) {$\{b \mapsto \textit{cat}\}$};
    \node[minimum width=1.3cm,, align=center] (O) at (5,0) {$\{c \mapsto \textit{dog}\}$};
    \node[minimum width=1.3cm,, align=center] (b_cat_c_dog_2) at (5,1) {$\{b \mapsto \textit{cat},$ \\ $c \mapsto \textit{dog}\}$}; %
    \node[minimum width=1.3cm,, align=center] (a_dog_c_dog) at (4,1) {$\{a \mapsto \textit{dog},$ \\ $c \mapsto \textit{dog}\}$}; %
    \node[minimum width=1.3cm,, align=center] (a_dog_b_cat) at (3,1) {$\{a \mapsto \textit{dog},$ \\ $b \mapsto \textit{cat}\}$};
    \node[minimum width=1.3cm,, align=center] (Q) at (4,2) {$\{a \mapsto \textit{dog}, b \mapsto \textit{cat}, c \mapsto \textit{dog}\}$};
    \path [->] (L) edge (M)
                   edge (N)
                   edge (O);
    \path [<-] (a_dog_b_cat) edge (M)
                             edge (N);
    \path [<-] (a_dog_c_dog) edge (M)
                             edge (O);
    \path [<-] (b_cat_c_dog_2) edge (N)
                               edge (O);
    \path [<-] (Q) edge (a_dog_c_dog)
                   edge (b_cat_c_dog_2)
                   edge (a_dog_b_cat);
    \path [<-] (A) edge[dashed] (M)
                   edge[dashed] (N)
                   edge[dashed] (L)
                   edge[dashed] (O);
    \path [->] (Q) edge[dashed] (E);

    \path [->] (F) edge (G)
                   edge (H)
                   edge (I);
    \path [<-] (J) edge (G)
                   edge (H);
    \path [<-] (b_cat_c_dog) edge (H) %
                           edge (I); %
    \path [<-] (a_cat_c_dog) edge (G) %
                                   edge (I); %
    \path [<-] (K) edge (b_cat_c_dog)
                   edge (J)
                   edge (a_cat_c_dog);
    \path [->] (A) edge (B)
                   edge (E);
    \path [<-] (A) edge[dashed] (F)
                   edge[dashed] (G)
                   edge[dashed] (H)
                   edge[dashed] (I)
                   edge[dashed] (b_cat_c_dog)
                   edge[dashed] (b_cat_c_dog_2)
                   edge[dashed] (a_dog_b_cat)
                   edge[dashed] (a_cat_c_dog);
    \path [<-] (B) edge[dashed] (J)
                   edge[dashed] (K);
    \path [<-] (E) edge[dashed, bend left=50] (a_dog_c_dog);
\end{tikzpicture}
  \caption{\label{fig:value-lattice} The knowledge lattices for two different runs of the voting protocol (left and right). The respective decision is displayed in the middle with dashed arrows pointing to the result of the decision function.}
\end{figure}

\subsection{Knowledge Lattice}

\begin{definition}[Knowledge Lattice]\label{def:knowledge-lattice}
The knowledge lattice $(S, \sqcup)$ of a PRDT is a join-semilattice with a partially ordered set $S$, and a merge function $\sqcup: S \times S \rightarrow S$ that is idempotent, commutative, and associative. As for any lattice, the partial order and merge function on elements $s_1, s_2 \in S$ imply each other: $s_1 \leq s_2 \iff s_1 \sqcup s_2 = s_2$.
\end{definition}

We implement the knowledge lattices using \emph{Algebraic Replicated Data Types (ARDTs)}~\cite{kuessnerAlgebraicReplicatedData2023}.
Like CRDTs~\cite{preguica2018, Shapiro2011}, ARDTs guarantee Strong Eventual Consistency (SEC) -- due to the underlying semi-lattice theory
the merge function composing replicas of the same type is robust to message duplication and reorder.
But ARDTs extend beyond CRDTs: They enable the modular construction of complex
replicated types from simpler ones via algebraic composition operators: the product types for combinations and the sum types for alternatives.
The system automatically derives the merge function of any composite
ARDT from the merge function of its constituents.
For product types, the generated merge function merges all product components individually, and for sum types, the merge function orders the alternatives by their ordinal value.
The automatic derivation of correct merge functions for complex ARDTs 
enables developers to model complex knowledge domains.
It is crucial for ensuring the correctness and compositionality of protocols in our model.

Each process maintains a \emph{local knowledge state}, which is an element of the PRDT knowledge lattice. 
The effects of local protocol actions by the process and remote updates synchronized from other processes are merged into the local state.
Modelling protocol knowledge as a lattice ensures that information is never lost: remote updates merge without conflict and strictly grow the local knowledge state. This monotonic growth ensures that the local states of all processes converge to the same common knowledge state, thus ensuring SEC, i.e., all processes eventually reach the same state, assuming that all updates are eventually delivered.

For the voting protocol, the knowledge to track consists of the votes per process.
We model this knowledge in the \texttt{Voting} PRDT using a \emph{Map} 
ARDT -- a product algebraic type that associates process IDs with voted-for values 
(\cref{fig:voting-rdt}, Line~\ref{line:voting-state}).
Note that \texttt{Voting} relies on a list of unique IDs (Uid) of the participating processes (Line~\ref{line:voting-participants}).
This list is fixed and is used to determine the majority needed for a decision.

\subsection{Protocol Actions and Their Guarding Preconditions}

\begin{definition}[Protocol Action]\label{def:protocol-action}
A \emph{protocol action} is a function $a: \textit{Uid} \rightarrow S$ that for a given unique process identifier $i \in \textit{Uid}$ returns a new \emph{delta} in $S$.
Each action $a$ has a \emph{precondition} $p_a \subseteq \textit{Uid} \times S$, which denotes the subset of $\textit{Uid} \times S$ for which $a$ is enabled.
\end{definition}

\emph{Protocol actions} specify how to add knowledge to the knowledge lattice. 
Executing a protocol action produces a \emph{delta} state\footnote{The use of deltas makes PRDTs a variant of delta state-based replicated data types~\cite{Almeida2018}.}.
Deltas are updates in the form of states that can be merged into copies of this data type to grow its knowledge.
To express protocol-specific requirements about permissible actions,
protocol actions can be equipped with \emph{preconditions}, 
which guard their execution based on the knowledge state and the process ID. 

The \emph{Voting} protocol has a single action \texttt{voteFor} (Line~\ref{line:voting-votefor}), which lets processes vote for a given value using their process ID\footnote{
  The \texttt{using LocalUid} is a context parameter in Scala. It provides access to the \texttt{replicaId} function, which returns the \texttt{Uid} of the current process.
}.
Deltas in our example wrap a single vote; for example, \texttt{Voting(Map(a -> cat))} is a delta that expresses that $a$ has voted for $cat$. 
A correct implementation of this voting protocol must ensure that every process only votes once, e.g., that \texttt{Voting(Map(a -> cat))} and \texttt{Voting(Map(a -> dog))} may never exist in the context of the same protocol execution.
To express protocol-specific requirement, the action \texttt{voteFor} has precondition \texttt{hasNotVoted} (Line~\ref{line:voting-hasvoted}), which checks if there already exists a vote for the local process ID.\footnote{
Note that in the formal representation we omit the parameters that actions have in the implementation.
We can model parameters without loss of generality by treating protocol actions with different parameters as different actions, e.g., \texttt{voteFor(cat)} and \texttt{voteFor(dog)} are modelled as two different protocol actions.}

To ensure correct and deterministic protocol decisions, preconditions must not depend 
on values that could be changed by concurrent protocol actions by other processes. 
We say that they need to be \emph{stable} upon remote updates. 
For example, checking that there is no vote for the local process ID is a stable precondition
because in this implementation, no other process can add a vote for the local process ID.
On the contrary, checking that there are no votes at all would not be a stable precondition, 
because other processes can add votes concurrently.
The \emph{stability} property is crucial for enabling local reasoning 
about the safety of a given PRDT implementation (\cref{sec:verification}).

\subsection{Decision Function and Decision Domain}
\label{sec:decision-function-overview}

\newcommand{\dec}{\mathit{dec}}

\begin{definition}[Decision Function]\label{def:decision}
Given the knowledge lattice $S$ and a partially ordered set of decisions $D$,
a \emph{decision function} $\dec{}: S \rightarrow D$ maps elements in $S$ to elements in $D$.
\end{definition}

The \emph{decision function} of a PRDT takes the current knowledge state and yields a value in the decision domain $D$.
The partial order on $D$ determines which decisions are 
compatible with each other. A transition from decision $d_1$ to decision $d_2$, where $d_1, d_2 \in D$  is permitted only if $d_1 \leq d_2$.
We say the larger decision is \emph{compatible} with the smaller one. If two decisions are incomparable (unrelated by $D$'s partial order), they are considered incompatible, meaning no transitions are allowed between them.
The PRDT verification procedure (Section \ref{sec:verification}) guarantees two properties that are key for consensus: first, that every process monotonically advances its local decision state -- transitioning only to compatible decisions -- and, second, that there always exists a decision that is compatible with the current decision of all processes.
By choosing a decision domain, PRDT developers can express various types of consensus, including classic protocols like single-shot (deciding a single value once) and multi-shot (continuously deciding on a series of values) consensus.

The decision function of \texttt{Voting} (Fig. \ref{fig:voting-rdt}, Line~\ref{line:voting-decision}) determines whether a certain value has received the majority of votes. If so, it returns this value as the decided value (e.g., \texttt{Decided(dog)}); otherwise, it returns \texttt{Undecided}.
The partial order on decisions entails that \texttt{Undecided} $\leq$ \texttt{Decided(v)} for any value $v$, allowing transitions from an undecided state to a decided one. Distinct decisions are incompatible (e.g., \texttt{Decided(cat)} $\not\leq$ \texttt{Decided(dog)}).
Consequently, any \texttt{Decided(v)} serves as a final state since there are no strictly greater elements in the order. The only permissible transition is from \texttt{Undecided} to a specific \texttt{Decided} value.

As an example with multiple compatible decisions, consider a global append-only log.
In the decision domain, each decision is a sequence of values, such as \emph{Log(a)} or \emph{Log(a, b, c)}.
We define the decision order by the prefix relation: \texttt{Log(a)} $\leq$ \texttt{Log(a,b,c)} because \texttt{Log(a)} is a prefix of \texttt{Log(a,b,c)}.
Under this partial order, it is permissible for a process to append an element, for example, by moving from \texttt{Log(a)} to \texttt{Log(a,b)}, since this transition grows the decision. In contrast, it is not permissible for two processes to independently decide \texttt{Log(a,b)} and \texttt{Log(a,c)} because there is no decision that is compatible with (i.e., greater than) both outcomes. As a result, the two processes could not be reconciled. We formalize the intuition behind this in the consensus safety property in \cref{sec:consensus-properties}.

\subsection{Execution Semantics of PRDT Systems}\label{sec:semantics-systems}
A \emph{PRDT system} is a distributed system executing a protocol defined by a PRDT. 
We model the execution behaviour of a PRDT system as a system knowledge and two rules that govern 
transitions in the knowledge.

\begin{definition}[System Knowledge]
Let $\textit{Uid}$ be a set of process identifiers and $S$ be the PRDT \emph{knowledge lattice}.
The system knowledge $K: \textit{Uid} \rightarrow \mathcal{P}(S)$ maps each process identifier to its current local knowledge, represented as a set of deltas $s \subseteq S$.
We write $K[i \mapsto s]$ to express that $K$ associates the set of deltas $s$ with process $i$.
\end{definition}

The system knowledge tracks the global state by maintaining the individual deltas observed by each process.
This representation offers finer granularity than tracking only the merged state,
but the current effective state of process $i$ can always be reconstructed by joining all (locally) known deltas: $\bigsqcup K(i)$ (cf. \cref{def:knowledge-lattice}).

\paragraph{Transition Rules}
Two transition rules govern the behaviour of a PRDT system: \textsc{apply} for local computation and \textsc{merge} for communication (see \cref{fig:rules}).
Given that the precondition $p_a(i, \bigsqcup K(i))$ for protocol action $a$ holds at process $i$, the application of the \textsc{apply} rule produces a step in the system knowledge $K \rightarrow K'$ by merging the delta $a(i)$ into the current knowledge of $i$, $K(i)$. The \textsc{merge} rule
models information propagation, allowing process $i$ to incorporate deltas known by process $j$.
By merging all deltas, the merge rule ensures that deltas are delivered in causal order.
The \textsc{apply} and \textsc{merge} rules generate a sequence of atomic steps, $K \rightarrow K'$, defining the global system trace.

\begin{figure}
\begin{mathpar}
\inferrule*[Left=apply]{%
p_a(i, s_i)
}{%
K[i \mapsto s_i] \rightarrow K[i \mapsto s_i \cup \{a(i)\}]
}
\and
\inferrule*[Left=merge]{
~
}{
K \rightarrow K[i \mapsto K(i) \cup K(j)]
}
\end{mathpar}
\caption{System transition rules.}
\label{fig:rules}
\end{figure}

\paragraph{Partial Failures}
The model allows for process crashes and network partitions.
A process crash corresponds to an execution trace where the \textsc{apply} and \textsc{merge} rules cease for a specific process.
Similarly, a network partition corresponds to a trace where the \textsc{merge} rule is never invoked between a specific pair of processes.
Since our PRDT correctness verification (Sec. \ref{sec:verification}) holds for \emph{all} valid execution traces generated by the rules, it also holds for traces exhibiting these fault patterns.

\subsection{Composition of PRDT-Based Protocols}
\label{sec:composing-rdts}

The algebraic composition 
supported by ARDTs enables algebraic protocol composition with PRDTs.
For illustration, consider a scenario that requires simultaneous voting on two values.
We can model this using two instances of the \texttt{Voting} PRDT from \cref{fig:voting-rdt}:
\begin{lstlisting}[language=myscala, basicstyle=\scriptsize\ttfamily, xleftmargin=.1\textwidth, xrightmargin=.1\textwidth, escapeinside={//§}{\^^M}, label={fig:parallelvoting-decision}]
case class ParallelVoting[A,B](voting1: Voting[A], voting2: Voting[B]) {
  def decision[A] = //§\label{line:parallelvoting-decision}
    (voting1, voting2) match
      case (Decided(a), Decided(b)) => Decided((a,b))
      case _ => Undecided
}
\end{lstlisting}
The merge function of the knowledge lattice for \texttt{ParallelVoting} automatically 
derives from the merges of the two \texttt{Voting} instances.
Additionally, \texttt{ParallelVoting} can make use of the protocol actions defined for its components, 
with each such action only modifying the state of the respective component.
Using composition ensures that the invariants enforced by the actions of each component are guaranteed, as long as 
no new protocol actions are defined that modify the components.
However, developers can extend composite PRDTs such as \texttt{ParallelVoting}
with new protocol actions and decision functions as is done in Line~\ref{line:parallelvoting-decision}:
This specialized decision function for \texttt{ParallelVoting} only makes a decision 
when both voting instances have made individual decisions.
This illustrates how composite PRDTs allow developers to define novel decision logic that orchestrates the outcomes of their constituent components, thereby enabling more sophisticated coordination behaviours than individual PRDTs on their own. 
This allows for subtle adjustments to coordination logic.
In \cref{sec:implementation}, we will elaborate more on how to define complex PRDTs such as the Paxos PRDT 
by composing simpler PRDTs such as the voting PRDT.

\section{Verifying Correctness of PRDT-based Protocol Implementations}
\label{sec:verification}

In this section, we formalize the correctness properties of the PRDT programming model. We begin by defining \emph{consensus safety} for PRDTs (Sec. \ref{sec:consensus-properties}) and discuss how it relates to classic consensus guarantees (Sec. \ref{sec:classical-consensus-properties}).
Finally, we derive a set of proof obligations sufficient to establish consensus safety (Sec. \ref{sec:proof-obligations}) and, after illustrating them by manually verifying the correctness of the Voting example protocol (Sec. \ref{sec-verifying-voting}), we demonstrate how to automate their verification (Sec. \ref{sec-automating-verification}) using the Stainless verifier for Scala~\cite{epfllaraStainlesFormal}.

\subsection{Consensus Safety of PRDTs}\label{sec:consensus-properties}

\begin{definition}[Consensus safety]\label{def:agreement}\label{def:consensus-safety}
A PRDT is consensus-safe if for any reachable knowledge state $K$ and any process identifiers $i, j \in \textit{Uid}$, the following holds:
\begin{align*}
(1) \quad &\exists d \in D.\ \  \dec(\textstyle\bigsqcup K(i)) \leq d \land \dec(\textstyle\bigsqcup K(j)) \leq d  \\
(2) \quad &K \rightarrow \dots \rightarrow K' \implies \dec(\textstyle\bigsqcup K(i)) \leq \dec(\textstyle\bigsqcup K'(i))
\end{align*}
\end{definition}

Condition (1) ensures that the decisions of any two processes remain reconcilable: there exists a decision $d$ that is compatible with both. This prevents the system from reaching a state where two processes hold incompatible decisions with no path to convergence.
Condition (2) enforces monotonicity of the decision function: as a process gains knowledge, its decision values can only increase according to the order of the decision domain. This prevents a process from reverting to an earlier decision or switching to an incompatible one.

\subsection{Consensus Safety and Standard Consensus Properties}\label{sec:classical-consensus-properties}

In the following, we relate our consensus safety (\cref{def:agreement}) to classical consensus properties.

\subsubsection{Correctness of Single-shot Consensus Protocols}

\citet[Module~5.1]{cachinIntroductionReliable2011} state the following correctness properties for single-shot consensus protocols.
\begin{enumerate}
    \item Agreement: No two correct processes decide differently.
    \item Integrity: No process decides twice.
    \item Termination: Every correct process eventually decides some value.
    \item Validity: If a process decides $v$, then $v$ was proposed by some process.
\end{enumerate}

Agreement and integrity are \emph{safety} guarantees, without them a client of the protocol could observe incompatible decisions.
In the following, we prove that consensus safety (\cref{def:agreement}) implies both agreement and integrity properties when using a decision function with the single-shot consensus domain from our voting example.
In that domain, there is a single undecided value and multiple individual decisions. Each decision is greater than the undecided value but incomparable to other decisions, making each decision maximal according to the decision order.

\begin{theorem}
Consensus safety implies agreement.
\end{theorem}
\begin{proof}
We prove the theorem by contradiction. Assume two processes $i$ and $j$ violate agreement, each deciding on different values $d_i = \dec(\bigsqcup K(i))$ and $d_j = \dec(\bigsqcup K(j))$, where both $d_i$ and $d_j$ are decided (maximal) but not equal: $d_1 \neq d_j$. Due to condition (1) of consensus safety (\cref{def:agreement}), there exists a decision $d$ such that $d_i \leq d$ and $d_j \leq d$. Since all decisions in a single-shot domain are maximal elements of the decision order, any $d$ greater than a maximal decision must be equal to that decision. Therefore, $d_i = d = d_j$, contradicting the assumption. Thus, consensus safety implies agreement.
\end{proof}

\begin{theorem}
Consensus safety implies integrity.
\end{theorem}
\begin{proof}
We prove the theorem by contradiction.
Assume process $i$ violates integrity by deciding twice. 
Then, there exists a trace $K \rightarrow \dots \rightarrow K'$, where $d = \dec(\bigsqcup K(i))$ and $d' = \dec(\bigsqcup K'(i))$ 
are both decided (maximal) decisions, but $d \neq d'$. Due to condition (2) of consensus safety (\cref{def:agreement}) monotonicity ensures $d \leq d'$.
Since both are maximal elements in the single-shot decision domain, this implies that $d = d'$, contradicting the assumption. 
Thus, consensus safety implies integrity.
\end{proof}

\subsubsection{Correctness of Multi-shot Consensus Protocols}

Beyond single-shot consensus, \citet[Module~6.12]{cachinIntroductionReliable2011} define correctness properties for more expressive consensus variants, which PRDTs also support. A canonical example is the replicated state machine model, where each process maintains a log of committed outputs. The decision domain of multi-shot consensus are sequences of elements such as \texttt{Log(a)} or \texttt{Log(a, b, c)}, 
where a decision $d$ is $\leq d'$ if $d$ is a prefix of $d'$.
According to \citet[Module~6.12]{cachinIntroductionReliable2011}, multi-shot consensus protocol implementations must satisfy the following properties:

\begin{enumerate}
    \item Multi-Shot Agreement: All correct processes obtain the same sequence of outputs\footnote{
    We assume that output here refers to decisions that cannot change, i.e. integrity of outputs is implied. This formulation of agreement also seems to imply aspects of termination, agreement should only require that processes agree on the observed outputs, with some processes lagging behind.
    }.
    \item Multi-Shot Termination: If a correct process executes a command, then the command eventually produces an output.
\end{enumerate}

\begin{theorem}
Consensus safety implies multi-shot agreement.
\end{theorem}
\begin{proof}
We prove the theorem by contradiction.
Assume an agreement violation of processes $i, j$ with states $K(i), K(j)$, and decisions $d_i = \dec(\bigsqcup K(i)), d_j = \dec(\bigsqcup K(j))$, where $d_i$ and $d_j$ where at least one element of $d_i$ differs from $d_j$
However, part (1) of consensus safety (\cref{def:agreement}) guarantees the existence of a decision $d$ with $d_i \leq d$ and $d_j \leq d$.
But this implies that both $d_i$ and $d_j$ are prefixes of $d$.
Without loss of generality, assume that $d_i$ is the shorter prefix, then this implies that $d_i \leq d_j \leq d$, but $d_i \leq d_j$ is a contradiction to the assumption that they differ in at least one element, thus consensus safety implies multi-shot agreement.
\end{proof}

\subsubsection{Termination}
Termination (condition 3 of the standard correctness properties for single-shot, and condition 2 in multi-shot consensus protocols) is a \emph{liveness} property (what must eventually hold), which is outside the scope of this work. The classic FLP impossibility result~\cite{fischerImpossibilityDistributed1985} proves that no consensus protocol can guarantee both safety and liveness simultaneously in a fully asynchronous environment subject to failures. Consequently, practical protocols must relax one of these guarantees: they typically aim to preserve safety while ensuring liveness only under synchrony assumptions or bounded fault rates. This work establishes safety guarantees in the general asynchronous setting, independent of fault assumptions.

\subsubsection{Validity}
\label{sec:validity}

Consensus safety does not imply validity (condition 4 of single-shot consensus): While consensus safety prevents incompatible decisions, validity constrains the set of permissible decision values. Crucially, however, establishing validity
does not require global reasoning 
and can be achieved by treating a PRDT as a standard sequential data structure and 
by tracking and analyzing the causal history of actions for a process $i$ and knowledge $K$.
Specifically, we construct and analyze 
$\textit{happenedBefore}(K,i)$ as follows:
when a step $K \rightarrow K'$ applies a local action $a$ at process $i$, we record $a$ in the history of $i$ as $a \in  \textit{happenedBefore}(K', i)$. When a merge step occurs, all actions known at the source process $j$ are added to the history of process $i$.
We then need to specify which action $a$ constitutes proposing a value according to the semantics of the PRDT.
Given the $\textit{happenedBefore}$ relation, when a process $i$ decides $v$, checking whether $v$ was proposed by $i$ corresponds to checking whether there is $a(v) \in \textit{happenedBefore}(K, i)$.

We can verify this for the voting protocol, which uses the \texttt{voteFor(v)} to propose values, by inspecting the code. Specifically, the decision function of \emph{Voting} only produces results that are in the votes map, and the \texttt{voteFor(v)} action is the only method that inserts element $v$ into the map. Thus, the vote $v$ must have been proposed before $decision(v)$.

\subsection{Safety Proof Obligations for the PRDT Model}
\label{sec:proof-obligations}

Directly verifying the \emph{consensus safety} (\cref{def:agreement}) is challenging because it necessitates global reasoning over the combined states of all processes. We circumvent this complexity in two steps. In \cref{sec:step-monotonicity}, we show that consensus safety follows from \emph{monotonicity of decisions for individual steps} of the system.
In \cref{sec:local-to-global}, we demonstrate that local reasoning
is sufficient to guarantee step monotonicity across all transitions, including merges.

\subsubsection{Reducing consensus safety to step monotonicity.}
\label{sec:step-monotonicity}

\begin{definition}[Step Monotonicity]\label{def:stepMono}
A PRDT system is \emph{step-monotonic}, if the decision function is monotone for any step that 
any process $i$ makes:
$$K \rightarrow K' \implies \dec{}(\textstyle \bigsqcup K(i)) \leq \dec{}(\textstyle \bigsqcup K'(i))$$
\end{definition}

\begin{theorem}\label{theorem:step-monot-is-agreement}\label{def:ConsensusFromStepMono}
Consensus safety follows from step monotonicity and strong eventual consistency.
\end{theorem}
\begin{proof}
Assuming step monotonicity, i.e., $K \rightarrow K' \implies \dec{}(\bigsqcup K(i)) \leq \dec{}(\bigsqcup K'(i))$, directly implies condition (2) of \emph{consensus safety} in \cref{def:agreement}, because we can extend the single step to multiple steps due to the transitivity of the decision order.
Strong eventual consistency implies that two processes eventually observe the same deltas, i.e, for any $K$ there exists a sequence of steps 
$K \mapsto^* K'$ 
with $K'(i) = K'(j)$, which implies that $d' = \dec{}(\bigsqcup K'(i)) = \dec{}(\bigsqcup K'(j))$.
Then, from monotonicity of the process steps, it follows that $\dec{}(\bigsqcup K(i)) \leq d'$ and $\dec{}(\bigsqcup K(j)) \leq d'$.
Thus, we have found a $d'$ as required by condition (1) of \emph{consensus safety} in \cref{def:agreement}.
\end{proof}

\subsubsection{From global to local verification conditions}
\label{sec:local-to-global}
While step monotonicity of the system is still a global property, we show that two local verification conditions are sufficient to establish it in the presence of monotonic structure of the knowledge lattice: 
(a) action monotonicity and (b) stability of preconditions.

\begin{definition}[Action Monotonicity]\label{def:action-monotonicity}
Action monotonicity for some protocol action $a$ holds if for any precondition $p_a$, any state $s$, and process $i$:
$$
p_a(s, i) \Rightarrow \dec{}(s) \leq \dec{}\big(s \sqcup a(i)\big)
$$
\end{definition}

\emph{Action monotonicity} guarantees that an application system step (\Cref{fig:rules}) cannot harm the monotonicity of decisions.

\begin{definition}[Stability of Preconditions]\label{def:stability}
A precondition $p_{a_1}$ is \emph{stable} if the following holds for any
concurrent protocol action $a_2$ in state $s$ with process identifiers $i \neq j$:
$$
p_{a_1}(i, s) \land p_{a_2}(j, s) \quad \Rightarrow \quad p_{a_1}(i, s \sqcup a_2(j))
$$

\end{definition}

\emph{Stability} preserves the precondition of concurrent protocol actions executed by different processes in the same state.
We use it to show that every state that is the result of a merge could also be reached by a sequential application of protocol actions.
Hence, monotonicity of the decision function for the merge follows from the monotonicity of each protocol action in the sequence:

\begin{lemma}[Merge Monotonicity]\label{lemma:merge-monotonicity}
Assume a PRDT system with system knowledge $K$ and a decision function $\dec{}$ where action monotonicity holds for all protocol actions and all preconditions are stable.
Let $s_i = \bigsqcup K(i)$ and $s_j = \bigsqcup K(j)$ be the knowledge states of processes $i$ and $j$, respectively. If $s' = s_i \sqcup s_j = \bigsqcup \big( K(i) \cup K(j) \big)$ is the merged state, then the decision function $\dec{}$ is monotonic with regard to merging: $\dec{}(s_i) \leq \dec{}(s')$ and $\dec{}(s_j) \leq \dec{}(s')$.
\end{lemma}
\begin{proof}
Assume actions and processes as above, with a state diagram as shown in \cref{fig:monotonic-stability}.
We first show monotonicity for merging two actions that were concurrently applied in the same state $s_0$ (on different processes). That is, $s_1 = s_0 \sqcup a_1(i)$, and $s_2 = s_0 \sqcup a_2(j)$, and $s' = s_1 \sqcup s_2 = s_1 \sqcup a_2(j)$ with $s_0 \leq s_1 \leq s'$ and $s_0 \leq s_2 \leq s'$.
Because the preconditions for $a_1(i)$ and $a_2(j)$ hold in $s_0$, that is, $p_1(i, s_0) \land p_2(j, s_0)$, we can conclude from stability that also $p_1(i, s_2)$ holds (in green in \cref{fig:monotonic-stability}).
From action monotonicity, we know that $\dec{}(s_0) \leq \dec{}(s_2)$, and also, due to the green precondition $p_1(i, s_2)$, we can now also conclude from monotonicity that $\dec{}(s_2) \leq \dec{}(s')$ because of the blue edge applying $a_1(i)$ from $s_2$ to $s'$.
We can use the same argument to show that $\dec{}(s_1) \leq \dec{}(s')$, thus merging of two states that are the result of concurrent actions in the same starting state is a monotonic step.

This argument extends to arbitrary reachable states, assuming that all processes started in the same initial state. This is because for any concurrent action $a_3(k)$ with $k \neq i$ applied in $s_2$ resulting in state $s_3$ (the top right state in the figure), we can again conclude from stability that $p_1(i, s_3)$ holds (again in green).
Generally, even though stability only requires reasoning about applying actions in the same state $s_0$, by repeatedly applying it to $s_2$, $s_3$, and so on, it does guarantee that $p_1(i, s)$ holds for any state $s$ that is concurrent with $s_1$, that is, where $s_0 \leq s \not\leq s_1$.
Thus, merging $s_1$ into any of its concurrent states is the same as applying $a_1(i)$ to that state, which, according to action monotonicity, preserves the monotonicity of the decision function.
\end{proof}

\begin{figure}
\centering
\begin{tikzpicture}[every node/.style={font=\itshape\tiny}, scale=1.5]

  \node[minimum width=1.3cm, align=center] (S0) at (0,0)  {$s_0$};
  \node[minimum width=2cm, align=right] (Assume) [right of=S0, xshift=2em] {$p_1(i, s_0) \land p_2(j, s_0)$};
  \node[minimum width=1.3cm, align=center] (S2) at (1,1) {$s_2$};
  \node[minimum width=2cm, align=right, color=green] (Implies12) [right of=S2] {$p_1(i, s_2)$};
  \node[minimum width=1.3cm, align=center] (S1) at (-1,1)  {$s_1$};
  \node[minimum width=1.3cm, align=center] (Sm12) at (0,2)  {$s' = s_1 \sqcup s_2$};
  \node[minimum width=1.3cm, align=center] (S3) at (2,2)  {$s_3$};
  \node[minimum width=2cm, align=right, color=green] (Implies23) [right of=S3] {$p_1(i, s_3)$};
  \node[minimum width=1.3cm, align=center] (Sm23) at (1,3)  {$s'' = s_1 \sqcup s_3$};

  \path[->] (S0) edge node[left]{$a_1(i)$} (S1);
  \path[->, dotted] (S1) edge node[left]{$a_2(j)$} (Sm12);
  \path[->] (S0) edge node[right]{$a_2(j)$} (S2);
  \path[->, blue] (S2) edge node[right]{$a_1(i)$} (Sm12);
  \path[->] (S2) edge node[right]{$a_3(k)$} (S3);
  \path[->, dotted] (Sm12) edge node[left]{$a_3(k)$} (Sm23);
  \path[->, blue] (S3) edge node[right]{$a_1(i)$} (Sm23);
\end{tikzpicture}

\caption{From stability and monotonicity of individual steps, follows monotonicity of merges.}
\label{fig:monotonic-stability}
\end{figure}
\begin{theorem}\label{theorem:agreement-monotonicity}
Assume a PRDT with a decision function $\dec{}$ and an initial knowledge state $K$ where $K(i) = \emptyset$ for all $i \in \textit{Uid}$.
Given action monotonicity (\cref{def:action-monotonicity}) holds for all protocol actions and all preconditions are stable (\cref{def:stability}), 
this system guarantees consensus safety as defined in \cref{def:agreement}.
\end{theorem}
\begin{proof}
Following Theorem~\ref{theorem:step-monot-is-agreement}, we know that consensus safety~(\cref{def:agreement}) follows from step monotonicity: $K \rightarrow K' \implies \dec{} \big( \bigsqcup K(i) \big) \leq \dec{} \big(\bigsqcup K'(i) \big)$ for any process identifier $i \in \textit{Uid}$.
According to the semantics defined in \Cref{fig:rules}, 
there are two possible steps that the system can perform: 

(i) Apply an action $a$ at process $i$ (rule \textsc{apply}): in this case, it holds that $\dec{}(\bigsqcup K(i)) \leq \dec{}(\bigsqcup K'(i))$ due to \emph{action monotonicity}.

(ii) Merge the deltas of process $j$ into the state of processes $i$ (rule \textsc{merge}): in this case, it holds that $\dec{}(\bigsqcup K(i)) \leq \dec{}(\bigsqcup K'(i))$ due to \emph{merge monotonicity} (Lemma~\ref{lemma:merge-monotonicity}).
\end{proof}

\paragraph{Discussion}
Ultimately, our verification method requires two conditions for every PRDT implementation: 
(1) \emph{action monotonicity} and (2) \emph{stability of preconditions}. 
These conditions mirror two core verification conditions in the verification method for 
consensus protocols modelled as operation-based replicated data types, as formulated by
 \citet{lewchenkoBoltOnStrong2025} in concurrent work.
Specifically, action monotonicity corresponds to their \emph{decision monotonicity},
while stability of the precondition corresponds to their \emph{racing state stability}.

However, the method of Lewchenko et al. imposes two additional conditions: 
(3) \emph{initiation safety}
and (4) \emph{commutativity of system actions}.
Initiation safety ensures that actions defined in the program are only applied when the preconditions specified in the external protocol definition are met. In the PRDT model, preconditions are intrinsically defined as part of the program. 
This design choice eliminates the need for explicit verification of initiation safety 
because the preconditions are automatically extracted and enforced directly from the program code.
Similarly, establishing commutativity of system actions becomes unnecessary 
due to the strong eventual consistency properties inherent to PRDTs.

Beyond reducing verification obligations for basic protocols such as Voting, the compositional structure of PRDTs further reduces verification complexity. 
Correctness proofs for basic protocols directly support verification of complex composed protocols. For instance, Paxos, which utilizes voting as a building block, can reuse correctness proofs for Voting (cf.  \cref{sec:implementation}).

\subsection{Verifying the Example Voting Protocol}
\label{sec-verifying-voting}

To illustrate the verification process introduced in the previous subsection, 
we verify consensus safety of the example voting protocol.
We provide a high-level proof of a more complicated consensus protocol (Paxos) in Appendix~\ref{sec:paxos-proof}.
\begin{proposition}
The Voting PRDT presented in \cref{fig:voting-rdt} guarantees \emph{consensus safety}.
\end{proposition}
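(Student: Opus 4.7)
The plan is to reduce safety to a single invariant: no reachable state contains two votes $(p, v_1), (p, v_2)$ with $v_1 \neq v_2$. By inspecting the definition of $\textit{decision}$, the $\top$ branch fires exactly when such a pair exists, so this invariant implies $d(s) \neq \top$ for every reachable $s$.

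First, I would characterise the set of reachable states. Since $S = \mathcal{P}(\{a,b,c\} \times \{\textit{cat}, \textit{dog}\})$ with $\sqcup_S = \cup$, and the only action is $\textit{voteFor}$, every reachable state is a union of outputs of $\textit{voteFor}$ invocations. Each such invocation is performed by some process, and the vote it produces carries that process's fixed $\textit{replicaId}$; thus any vote $(p, v)$ appearing in a reachable state was produced by the single process whose identifier is $p$.

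Next I would argue that each process contributes at most one distinct vote. Fix a process $p$ with its monotonically growing local state $s_p$. The threshold query $\textit{hasNotVoted}$ tests non-existence of an element of a grow-only set, so it is monotone non-increasing along $\leq$: once it becomes \emph{false}, it remains \emph{false} on every larger state. After the first successful $\textit{voteFor}(s_p, v)$ call, the returned delta $\{(p, v)\}$ is joined immediately into $s_p$ (as stated in \cref{sec:replication-model}), so $s_p$ thereafter contains $(p, v)$; any merges from other replicas only enlarge $s_p$ further. Consequently every subsequent $\textit{voteFor}(s_p, v')$ invocation finds $\textit{hasNotVoted}(s_p) = \textit{false}$ and produces the bottom delta $\bot$, contributing nothing to any downstream state.

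Combining these two observations: the multiset of deltas ever produced across the whole system contains, for each process $p$, at most one pair $(p, v)$ with a fixed $v$. Any union of such deltas therefore cannot contain $(p, v_1)$ and $(p, v_2)$ with $v_1 \neq v_2$, so the duplicate-vote clause of $\textit{decision}$ is never triggered, and $d(s) \neq \top$ on every reachable $s$.

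The main obstacle I anticipate is being precise about the interaction between a process's local view and the global state referenced in \cref{def:safety}: one must argue that even though $\textit{voteFor}$ is evaluated against $p$'s local state (which may lag behind the union of all deltas), the fact that $p$'s own delta is merged in immediately suffices to disable any second vote by $p$, independently of network timing. Once this is spelled out, the rest is a routine monotonicity argument on a grow-only set.
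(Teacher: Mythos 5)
Your proposal is correct and follows essentially the same route as the paper's proof: both hinge on the facts that a vote $(p,v)$ can only be produced by process $p$ itself, that $p$'s own delta is joined into its local state immediately, and that the monotone \textit{hasNotVoted} query therefore disables any second vote, so no reachable state can contain $(p,v_1),(p,v_2)$ with $v_1 \neq v_2$. The only difference is presentational — the paper phrases the final step as a contradiction on the enabledness of \textit{voteFor}, while you argue the ``at most one vote per process'' invariant directly — and your explicit note about the local-view versus global-state subtlety is exactly the assumption the paper also makes.
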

\begin{proof}

As highlighted in the previous subsection, to verify consensus safety, we have two proof obligations: action monotonicity and precondition stability.
The \texttt{Voting} protocol defines a single action, \texttt{voteFor} action (Line~\ref{line:voting-votefor}) guarded by the precondition \texttt{hasNotVoted} (Line~\ref{line:voting-hasvoted}).

The action has a stable precondition, because
processes can only vote using their own unique \texttt{replicaId}, ensuring that any two concurrent \texttt{voteFor} actions will necessarily use different identifiers. This design implies that the \texttt{hasNotVoted} precondition remains stable under concurrent updates, as it checks for the existence of a vote by the local identifier -- a state that cannot be altered by votes from other, distinct identifiers.

To prove action monotonicity of \texttt{voteFor},
we distinguish two cases that could occur when executing \texttt{voteFor} while its precondition is enabled:
\begin{enumerate}
    \item If the decision was previously \texttt{Undecided}, monotonicity holds trivially because every decision result is greater or equal to \texttt{Undecided} w.r.t. the partial order of decisions.
    \item If the decision was previously \texttt{Decided(a)} for any value \texttt{a}, we know that there exists a majority of votes for \texttt{a}. Since the precondition \texttt{hasNotVoted} implies that every process can only vote once, and the set of processes is fixed, we know that there can never be a different majority. Therefore, the decision remains \texttt{Decided(a)}, which satisfies action monotonicity.
\end{enumerate}
\end{proof}

\subsection{Automating the Verification}
\label{sec-automating-verification}

\begin{figure}
\begin{lstlisting}[language=myscala, basicstyle=\scriptsize\ttfamily, xleftmargin=.1\textwidth, xrightmargin=.1\textwidth, escapeinside={//§}{\^^M}, caption={Excerpt of the consensus trait.}, captionpos=b, label={fig:consensus-typeclass}]
trait Consensus[A] {
  def decisionOrder(s1: State, s2: State): Boolean //§\label{line:decisionOrder}

  @law//§\label{line:law}
  def monotonicity(n: Uid, a: Action, s: State): Boolean = {//§\label{line:law-decision-monotonicity}
    val delta = apply(s, a, n)
    val s1 = merge(s, delta)
    precondition(s, a, n) ==> decisionOrder(s, s1)
  }

  @law
  def stability(n1: Uid, n2: Uid, a1: Action, a2: Action, s: State): Boolean = {//§\label{line:law-stability}
    val delta: Delta = apply(s, a2, n2)
    val s1 = merge(s, delta)
    (n1 != n2 && precondition(s, a1, n1) && precondition(s, a2, n2)) ==>
      precondition(s1, a1, n1)
  }
}
\end{lstlisting}
\end{figure}

While we used the Voting PRDT to illustrate manual verification, we employ \emph{Stainless}~\cite{hamzaSystemFR2019, epfllaraStainlesFormal}, a verification framework~\cite{hamzaSystemFR2019, epfllaraStainlesFormal}
for Scala programs, for automated support. Specifically, we map the required \emph{precondition stability} and \emph{action monotonicity} 
verification obligations from \cref{sec:proof-obligations}
to algebraic laws defined in a \texttt{Consensus} trait in \cref{fig:consensus-typeclass}.
Stainless 
automatically checks these laws against developer-provided implementations of this trait. 
For example, for automated verification \texttt{Voting} from \cref{fig:voting-rdt} reduces to implementing 
the \texttt{Consensus[Voting]} type.

The main requirement when instantiating the \texttt{Consensus} trait is to define an order on the PRDT’s decision domain via \texttt{decisionOrder} (Line~\ref{line:decisionOrder}). 
\emph{Action monotonicity} is stated relative to \texttt{decisionOrder} (Line~\ref{line:law-decision-monotonicity}) and encoded as a Stainless \texttt{@law} (Line~\ref{line:law}) that must be discharged for every instance. The same applies to \emph{stability} (Line~\ref{line:law-stability}). In principle, no additional developer input is required; in practice, however, automated verification often needs auxiliary lemmas and carefully placed assertions to guide the prover to a successful proof.

Our verification method effectively exploits the algebraic compositionality of the PRDT model as Stainless can reuse proven component protocols when verifying composite protocols. 
For example, Stainless is able to verify the monotonicity of \texttt{ParallelVoting} 
(\cref{sec:composing-rdts}) 
by relying solely on the previously verified properties of \texttt{Voting}. 
Similarly, we can leverage the monotonicity of a component PRDT to establish precondition stability in composite PRDTs.
For instance, because a \texttt{Voting} decision is monotonic, any precondition waiting for a specific \texttt{Decision(a)} is inherently stable.
We exploit this insight in the next section where we showcase composite Paxos-style PRDTs and reuse the verified properties of the \texttt{Voting} PRDT while 
verifying the Paxos PRDT (see Appendix \ref{sec-verifying-paxos-appendix}).

\section{Case Study: Composition and Decomposition of Paxos-Like Protocols}\label{implementing-a-real-protocol}\label{sec:implementation}

In this section, we demonstrate that the PRDT model effectively scales to real-world consensus protocols by implementing protocols inspired by the Paxos ~\cite{lamportParttimeParliament1998} family. This family encompasses various optimizations and specialized variants developed over decades of research, reflecting both theoretical advancements and practical refinements~\cite{boichatDeconstructingPaxos2003, chandraPaxosMade2007, lamportFastPaxos2006, lamportGeneralizedConsensus2005, lamportPaxosMade2001, nawabDPaxosManaging2018, padonPaxosMade2017, vanrenessePaxosMade2015, maoMenciusBuilding2008, rivettiStateBased2013}. 

We start by describing a complete implementation of classic Paxos, showcasing how complex PRDTs come to life by combining reusable building blocks -- like voting mechanisms, leader election, and shared resources -- each implemented as standalone PRDTs (\cref{sec:genpaxos}).
Afterwards, \cref{sec:multi-round} highlights the power of the RDT-based model in enabling \enquote{adaptable protocols}.
Here, we demonstrate how protocol components can be enhanced with new capabilities, such as transforming single-round protocols into multi-round versions.
\Cref{sec:optimizations} shows how to alter the behaviour of protocol parts beyond composition and how this can be used to implement optimizations such as Multi-Paxos.
Finally, in \cref{sec:reconf}, we tackle dynamic reconfiguration of consensus participants, leveraging our Paxos implementation and the composability of PRDTs.
In addition, in Appendix~\ref{sec-verifying-paxos-appendix}, we provide a high-level proof of the Paxos PRDT, similar to the proof of \texttt{Voting} in \cref{sec-verifying-voting}. A stainless encoding of Paxos will be included as part of the artifact submission.

\subsection{Paxos as a Data Type}
\label{sec:genpaxos}
\label{sec:paxos}

The example voting protocol 
from \cref{sec:overview} 
can stall when each process votes for a different value. Consensus protocols used in practice such as Paxos~\cite{lamportParttimeParliament1998} address this by allowing multiple voting rounds consisting of two phases: first, processes elect a leader; second, the leader proposes a value. If a majority accepts, this value becomes the final decision. If either phase halts, processes terminate the attempt and start a new round.

\paragraph{Involved Components}
The listing below shows the data types of our Paxos PRDT.
A Paxos PRDT is modelled as a map of ballot numbers to corresponding \texttt{PaxosRound} values.
A ballot number is a unique identifier that is used to order and identify Paxos rounds.
Each Paxos round is represented as a product of \texttt{leaderElection} and \texttt{proposals}, 
which are both
instantiations of the \texttt{Voting} PRDT that was introduced in \cref{fig:voting-rdt}.
This composition forms a new data type
that encapsulates a complete consensus round.
The type parameter \texttt{A} represents the values that the consensus protocol seeks to agree upon. For example, in a distributed key-value store, these could be operations such as \texttt{write} or \texttt{read}.
\begin{lstlisting}[language=myscala]
case class Paxos[A](rounds: Map[BallotNum, PaxosRound[A]])
case class BallotNum(uid: Uid, counter: Long) 
case class PaxosRound[A](leaderElection: LeaderElection, proposals: Voting[A])
type LeaderElection = Voting[Uid]
\end{lstlisting}

Instead of organizing rounds as a sequential list of \texttt{PaxosRound} instances, we use a map-based structure where each round is uniquely identified by a monotonically increasing \emph{ballot number}. This design enables efficient representation of PRDT delta-states, allowing any arbitrary subset of ballot numbers to be captured.
Similar to other Paxos implementations~\cite{howardGeneralisedSolution2019}, we use a design where only one process can attempt to become the leader for a given round. This constraint is enforced by deriving each \texttt{BallotNum} from the process ID and a counter. 
\texttt{BallotNum}s are totally ordered by comparing their counters first and using lexicographic ordering of process IDs as a tiebreaker.
\begin{figure}
\lstinputlisting[language=myscala, basicstyle={\scriptsize\ttfamily}, caption={Paxos implemented as a PRDT.}, captionpos=b, label={fig:paxos-prdt}]{listings/paxos-prdt.scala}
\end{figure}

\paragraph{Putting the Pieces Together}

\cref{fig:paxos-prdt} displays the Paxos PRDT in Scala.
In Line~\ref{line:paxos-voting}, we define two helper methods, \texttt{voteLeader} and \texttt{voteValue}, which cast votes in the
\texttt{leaderElection} and \texttt{proposals}, respectively. 
This demonstrates how composite data types, such as \texttt{Paxos}, can expose protocol actions from their inner data types (in this case, \texttt{Voting}).
Importantly, these methods inherit the safety guarantees of the original \texttt{Voting} PRDT: processes cannot vote more than once. Even if the composite type (\texttt{Paxos}) were to misuse these methods by attempting multiple votes, only the first vote would affect the outcome.

In Line~\ref{line:paxos-preconditions}, we define three preconditions which are used by the protocol actions.
We explain these alongside their respective protocol actions, which align with the traditional phases of the Paxos protocol:

In \textbf{phase 1a} (Line~\ref{line:phase1a}), a process attempts to become the leader by initiating a new voting round.
As a precondition, we require that processes can only try to become the leader in rounds that are associated with their process ID.
When inserting the new entry, the \texttt{voteLeader(candidate)} function produces a new \texttt{PaxosRound} with one vote for \texttt{candidate} in the \texttt{leaderElection} and no votes in \texttt{proposals}.
When multiple processes attempt to initiate a new round concurrently, 
all rounds except the one with the highest ballot number are eventually abandoned. 

In \textbf{phase 1b} (Line~\ref{line:phase1b}), processes confirm the leader by casting a vote.
As a precondition, we require that the round we are trying to vote in actually has a running leader election for the given candidate.

\textbf{Phase 2a} (Line~\ref{line:phase2a}) starts whenever a process is confirmed as a leader.
This is determined by the precondition \texttt{isLeader}, which evaluates the decision made by the \texttt{LeaderElection} for the given round.
Additionally, the precondition ensures that there is no value proposed for the current round yet.
Interestingly, this second condition would not be stable on its own but by combining it with a monotone decision from the LeaderElection it becomes stable.
The leader selects a value to propose by examining all prior rounds and
choosing the proposed value from the most recent round (based on the ballot ID). 
Thanks to phase1b, it is guaranteed that this value 
represents the latest information known by any process that participated in confirming the leader.
If no such value exists, the leader is free to propose any value (expressed by selecting \texttt{myValue}).

In \textbf{phase 2b} (Line~\ref{line:phase2b}), processes accept the proposed value by casting their vote in \texttt{proposals}.
The precondition \texttt{hasProposal} checks whether the leader has already proposed a value and that this value matches the one that the process is accepting.
This ensures that voters in phase2b can only acknowledge a proposal by the leader and not vote for any other value.

Finally, Line~\ref{line:paxos-decision} shows the \textbf{decision function} of the Paxos PRDT.
This function checks whether any round of proposals lead to a decision and -- if yes -- returns that decision.
Using any decision is safe because Paxos guarantees that every round that makes a decision decides on the same value.
Otherwise, it returns \texttt{Undecided}.

\subsection{Algebraic Composition for Multi-Shot Protocol Design}
\label{compositionality-and-data-type-properties}
\label{sec:multi-round}

The Paxos PRDT 
is \enquote{single-shot}, meaning it is limited to making a single decision at a time.
However, processes often require consensus on a \emph{series} of values, such as transaction ordering. 
In the following, we demonstrate how to construct composed data types that can
manage various types of decisions -- ranging from history-free to totally or partially ordered sequences --
by combining single-shot PRDTs with standard data structures like epochs, 
lists, and graphs.

\paragraph{Scenario 1: Series of decisions without retaining a history.}
Consider a manufacturing scenario where autonomous robots must coordinate task assignments to prevent collisions.\footnote{This scenario is inspired by a case study from Actyx:\\\url{https://2023.splashcon.org/details/plf-2023-papers/11/Local-first-at-Actyx}}
While past decisions can be discarded once a robot begins execution, it is crucial to prevent concurrent task assignments.
\emph{State-based Paxos}~\cite{rivettiStateBased2013} is a variant of Paxos that ensures this.
In the PRDT-based model, we can define this variant by composing the single-shot Paxos PRDT into the \texttt{Epoch} type, creating the \texttt{EpochPaxos} data type. An Epoch is a product consisting of a counter and a value of a generic type \texttt{A}.
\begin{lstlisting}[language=myscala]
case class Epoch[A](counter: Int, value: A)
case class EpochPaxos[A](inner: Epoch[Paxos[A]]):
  def nextDecision() =
    precondition(isDecided(inner.value))(
      EpochPaxos(Epoch(counter + 1), Paxos())
    )
\end{lstlisting}

We create \texttt{EpochPaxos} by instantiating the generic type \texttt{A} with the previously defined \texttt{Paxos} PRDT.
\texttt{Epoch} has a special merge function: 
When merging two instances the one with the larger counter is returned;
if both counters are equal, then the two inner values are merged.
Thus, to start a new \enquote{epoch} and decide on a new value, 
the \texttt{nextDecision} method produces a new state with a larger counter and an empty \texttt{Paxos} instance.
This assumes that the prior decision is no longer needed.

\texttt{EpochPaxos} does not need a new decision function: It reuses the decision function of the inner \texttt{Paxos} data type for each \texttt{Epoch}.
But it does introduce a new protocol action \texttt{nextDecision} with a precondition that checks whether the inner \texttt{Paxos} has passed a decision.
This approach is safe, assuming the \texttt{Paxos} implementation itself is safe. The precondition specifies that an action transitioning to epoch $n+1$ can only be applied once epoch $n$ has been decided. This strict ordering prevents concurrent actions, thereby ensuring stability.
This demonstrates how reusing verified PRDT components facilitates modular reasoning, significantly reducing the verification burden for protocol adaptations.

\paragraph{Scenario 2: Series of decisions forming a totally ordered log.}
A transaction processing system might want to keep all past decisions 
to reason about the validity of past and current decisions.
We can represent a list of decisions by composing a 
standard \texttt{List} ARDT with the \texttt{Paxos} PRDT:
\begin{lstlisting}[language=myscala]
  case class SequencePaxos(log: List[Paxos]):
    def nextDecision() =
      precondition(log.forall(isDecided))(
        SequencePaxos(log.append(Paxos()))
      )
\end{lstlisting}
The composition automatically derives a merge operation for \texttt{SequencePaxos} from the 
merge semantics for lists and \texttt{Paxos}.\footnote{The merge operation for lists used by \texttt{SequencePaxos}
combines two instances by merging entries at the same index of the lists. 
If one list is longer than the other, then the additional index positions are kept.}
\texttt{SequencePaxos} maintains all decided values from each \texttt{Paxos} instance. 
Its decision domain is the list of individual decisions.
A common approach to defining a decision order for such cases is to check if one list is a prefix of the other.
Following this, \texttt{SequencePaxos} is consensus safe if it can only append new decisions to the list without changing existing ones.

Given the implementation, processes can only initiate voting in the $n$th \texttt{Paxos} instance after consensus is reached in all preceding instances $(0, ..., n-1)$.
Building on the consensus safety of \texttt{Paxos}, this ensures consensus safety of \texttt{SequencePaxos} (similar to \texttt{EpochPaxos}), because the precondition prevents appending new entries before the previous ones are decided.
This approach aligns with established protocols like \emph{Mencius}~\cite{maoMenciusBuilding2008}.

\paragraph{Scenario 3: Series of partially-ordered decisions.}

Many real-world applications perform independent operations that do not require strict sequencing. In a shopping system, operations within a single order (like \emph{checkout}, \emph{payment}, and \emph{delivery}) must be ordered, but operations across different orders can execute concurrently for boosting performance. %
\emph{Generalized consensus protocols}~\cite{lamportGeneralizedConsensus2005, whittakerSoKGeneralized2021} address this need by allowing parallel decisions for independent operations.

With the PRDT-based model, we can easily lift a single-shot consensus protocol to a generalized multi-shot protocol by modelling instance dependencies as a graph, as demonstrated by 
the \texttt{GenPaxos} data type below.
The state of \texttt{GenPaxos} instances is modelled as a map from \texttt{Uid}s (in this case, representing individual decisions, not processes) to \texttt{Paxos} instances. Each entry contains both a \texttt{Paxos} data type and a set of predecessor operations that this decision \emph{depends on}.
For example, in a shopping system, predecessors would include prior operations within the same order. 
Processes can vote on multiple independent \texttt{Paxos} instances simultaneously, 
provided their predecessors have reached consensus.

Although \texttt{GenPaxos} is a relatively complex composite type, the merge function of all involved parts is derived automatically: case classes merge their components, sets are unioned together, and maps merge the values at the same keys, keeping key-value pairs only present in either of the merged states.
\begin{lstlisting}[language=myscala]
  case class PaxosWithPredecessors(consensus: Paxos, predecessors: Set[Uid])
  case class GenPaxos(operations: Map[Uid, PaxosWithPredecessors]):
    def nextDecision(predecessors: Set[Uid]) =
      precondition(predecessors.forall(p => isDecided(operations(p).consensus)))(
        GenPaxos(Map(Uid.gen, PaxosWithPredecessors(Paxos(), predecessors)))
      )
\end{lstlisting}
The decision domain of \texttt{GenPaxos} is a map from unique operation identifiers (\texttt{Uid}) to the decisions made by their respective inner Paxos instances. A decision in this context is considered larger or equal if each individual map entry (i.e., each inner Paxos decision) represents a larger or equal decision.
Consensus safety of \texttt{GenPaxos} follows because operations are only added after their predecessors have reached a decision, and because that decision is consensus safe due to the inner Paxos.
The precondition is stable because we assume that \texttt{Uid.gen} produces globally unique IDs, thereby preventing conflicting insertions altogether.

\subsection{Protocol Optimizations}\label{sec:optimizations}

Thus far, we have focused on algebraically composing multiple PRDT components without altering their behaviour. 
However, there are scenarios, where one would like to alter the behaviour of the components being composed, e.g.,
enhance performance in a certain deployment setting~\cite{lamportFastPaxos2006, huntZooKeeperWaitfree2010, kotlaZyzzyvaSpeculative2010, maiyyaUnifyingConsensus2019, maoMenciusBuilding2008, nawabDPaxosManaging2018, vanrenessePaxosMade2015, weiVCorfuCloudScale2017,wangParallelsPaxos2019}.
One common optimization 
for executing multiple Paxos instances sequentially 
is \emph{Multi-Paxos}~\cite{chandraPaxosMade2007, lamportPaxosMade2001}. 
Multi-Paxos assumes a stable leader process, eliminating the need for leader re-election for every Paxos instance.
This allows subsequent instances to skip leader election and directly begin in phase 2, where the leader proposes values.

We can model this behaviour as the following \texttt{MultiPaxos} PRDT.
\texttt{MultiPaxos} is a variation of \texttt{EpochPaxos} from the previous subsection, but it uses a copy of the already decided leader election instead of conducting a new election.
This approach ensures that phase 2 is already enabled, allowing the leader to directly start proposing values.
Reusing \texttt{Voting} is safe because it creates a shortcut to a decision that was already reached before, and it performs this shortcut deterministically for all replicas.

\begin{lstlisting}[language=myscala]
case class MultiPaxos[A](inner: Epoch[Paxos[A]]):
 def nextDecision() =
  precondition(isDecided(inner.value))(
    MultiPaxos(Epoch(counter + 1), Paxos(
           Map(nextBallotNum ->
               PaxosRound(
                 leaderElection = currentLeaderElection,
                 proposals = Voting[A]() // empty voting
               )))))
\end{lstlisting}

\subsection{Reconfigurations}
\label{sec:reconf}
Consensus algorithms like Paxos or Raft require a fixed set of processes (a \emph{configuration}) that determines key parameters like majority thresholds.
However, practical systems need \emph{reconfigurations}~\cite{lamportReconfiguringState2010} to dynamically modify process membership, enabling removal of faulty processes, addition of new participants, and adaptation to changing system requirements. This capability is crucial for maintaining system availability and scalability.
Following the same algebraic design style used for other Paxos variants, reconfigurations can be expressed in the PRDT-based model by composing existing building blocks.
Concretely, by binding each consensus protocol instance to a specific configuration, using the consensus mechanism itself to establish configuration changes. Here is an implementation expressing this pattern as a PRDT:
\begin{lstlisting}[language=myscala]
  case class ConfigurationRound[A](
    currentMembers: Set[Uid]
    nextMembers: Paxos[Set[Uid]],
    innerConsensus: Paxos[A])
  case class ReconfigurablePaxos[A](inner: Epoch[ConfigurationRound[A]]):
    def nextDecision() =
      precondition(
        isDecided(inner.value.nextMembers) &&
        isDecided(inner.value.innerConsensus)
      )(ReconfigurablePaxos(
          Epoch(inner.counter +1, ConfigurationRound(
              inner.value.nextMembers.decision.get,
              Paxos(), Paxos()))))
\end{lstlisting}
Each round decides both a value and the membership for the next decision, enabling reconfiguration support in previously fixed-configuration protocols. The \texttt{ConfigurationRound} type integrates two \texttt{Paxos} instances into a product algebraic data type: \texttt{memberConsensus} for configuration changes and \texttt{innerConsensus} for value decisions.
As in similar examples, consensus safety directly follows from making a decision only when both inner consensus instances have reached a decision.

\subsection{Concluding Remarks}

This section demonstrated how PRDTs enable the modular creation of various consensus variants -- ranging from simple voting to complex Paxos-like protocols and variations thereof to reconfigurable protocols.
The PRDT framework fosters recursive protocol designs starting with PRDTs that implement basic consensus building blocks at the bottom and encompassing ever more advanced PRDTs that extend and compose simple ones.
By extending the existing ARDT library~\cite{kuessnerAlgebraicReplicatedData2023}, PRDTs also enable protocol developers to leverage their language integration with modern programming language features such as parametric algebraic data types, traits and type classes.

Crucially, the model relieves protocol developers from explicitly handling network message exchanges. 
Instead, the built-in convergence guarantees allow developers to focus on individual actions and their preconditions to ensure consensus safety.
When composing smaller components that have been proven safe individually (see \cref{sec-automating-verification}), protocol designers can take the safety of those smaller components for granted when reasoning about the safety of the composed protocol. For example, they can rely on the fact that decisions made by verified components remain final and do not change later on.

\section{Evaluation}
\label{sec:evaluation}

PRDTs enable consensus protocol development at a higher level of abstraction by shifting from message-passing to knowledge-based modelling and supporting modular composition. To investigate their 
comparative performance,
we formulate the following research question:

\begin{enumerate}
\item[] Practical Feasibility: Can PRDTs achieve performance comparable to established consensus implementations in real-world applications?
\end{enumerate}

To answer this question, we implemented a strongly consistent distributed key-value store using an extended version of the \texttt{Multi-Paxos} PRDT introduced in \cref{sec:optimizations} and 
compared its performance against etcd~\cite{etcd}, a widely used key-value store that relies on the Raft consensus protocol. 

\subsection{Our Key-Value Store Implementation}

Our distributed key-value store runs on a cluster of several nodes to ensure fault-tolerance (i.e., the key-value store continues to run, even if some nodes fail). 
The underlying consensus algorithm guarantees that clients always access the most recent version of the data, no matter which node they contact.
More precisely, our implementation guarantees \emph{strict serializability} for all operations -- the same consistency guarantee that etcd provides in its default configuration.\footnote{See \url{https://jepsen.io/analyses/etcd-3.4.3} for a discussion of etcd's guarantees.}
This is achieved through two mechanisms: 
First, all write operations are totally ordered in a grow-only log that is replicated across the nodes (via Multi-Paxos).
Second, read requests are only answered by the current Multi-Paxos leader, guaranteeing that reads always access the latest version of the data.
We use a lightweight PRDT-based heartbeat protocol to track node failures (via timeouts). This ensures that the leader is always connected to a quorum of nodes before answering read requests, thus preventing a \enquote{split brain} situation where two nodes concurrently assume that they are the leader.

Updates are encoded as JSON-serialized deltas and transmitted over direct TCP connections.\footnote{using jsoniter-scala: \url{https://github.com/plokhotnyuk/jsoniter-scala}} 
In our benchmark setup, every client is connected to one server at a time (which is by default the leader server).

\subsection{Experimental Setup}
We evaluated etcd and our key-value store in three different setups: 
\begin{enumerate}
  \item \textbf{One data centre} with a varying number of servers placed on different physical machines colocated in the same data centre. We evaluated the latency and throughput of the two systems with different numbers of clients (increasing concurrent requests) and configurations with 3 and 5 server nodes.\footnote{etcd recommends using no more than 7 nodes, while Google's Chubby lock service recommends 5 nodes: \url{https://etcd.io/docs/v3.6/faq/\#what-is-maximum-cluster-size}}
  \item \textbf{One data centre with leader-failure}, which is a variant of the previous setup, where we let the leader fail after 10 seconds.
  \item \textbf{Geo-replicated data centres} with servers placed on different continents. We distributed nodes across three data centres -- Germany (DE), hosting the leader, Singapore (SG), and the U.S. East Coast (US). We tested two configurations: a minimal setup with one node per region (3 nodes total), and a robust setup with three nodes per region (9 nodes total). The latter is relevant in realistic deployments, because it tolerates node failures within a region while maintaining the site-level redundancy of the former.
\end{enumerate}
Scenario (1) and (3) test the scalability and flexibility of PRDT deployments while scenario (2) tests their fault tolerance.

\paragraph{etcd Configuration}

To assess the overhead that PRDTs introduce on the core consensus path, we need to account for optimizations in etcd that either bypass consensus or reduce its per-operation cost. Concretely, we apply two measures.
First, we benchmark both systems under two workloads: a write-only workload where every request must pass through the consensus protocol, and a (more realistic) read-mostly workload that measures the impact of read optimizations.\footnote{Similar to our implementation, etcd serves reads via a heartbeat-based heuristic rather than invoking the full Raft-based consensus protocol:
  \url{https://deepwiki.com/openshift/etcd/4.2-request-processing\#linearizable-read-processing-readindex}
}
Second, we disable request batching in etcd, which would otherwise merge multiple client operations into a single consensus round and obscure per-operation cost.
A further optimization could not be disabled:
etcd's optimistic pipelining of consensus rounds, which parallelizes concurrent request processing.\footnote{
\url{https://github.com/etcd-io/raft/tree/main?tab=readme-ov-file\#features}}
Unless otherwise noted for the specific scenario, we left all other settings at the default value.

\paragraph{Workload Generation}

We use the Yahoo! Cloud Serving Benchmark (YCSB)~\cite{cooperBenchmarkingCloud2010} to generate workloads and to drive our benchmarks.
Our read benchmarks use workload B (95/5 read/write mix) included with YCSB, while the write benchmarks use a write-only workload to measure consensus performance in isolation.
For the local setups, we use workloads with 100k operations per run while the geo-replicated setup uses 1k operations per run.
We repeat every run 3 times and report averages.
Each run has a fixed number of client threads on the benchmark driver issuing the operations.
Each client waits for an answer before issuing another request, thus, with a single client, all requests are issued sequentially, while with twenty clients, there will be twenty concurrent requests.

\paragraph{Used Hardware and Software}
We use servers with dedicated vCPUs on AMD EPYC hardware with 8 vCPUs and 32 GB of RAM.
The servers run Ubuntu 24.04, etcd version 3.6.7 and OpenJDK 25 (for YCSB and our Scala implementation).%

\paragraph{Round Trip Latencies}
For servers colocated in the same data centre, we measured a round-trip network latency of 0.4 - 0.5 milliseconds.
Inter-region round-trip times (RTTs) were as follows: 161 ms (DE-SG), 112 ms (DE-US), and 216 ms (US-SG).

\subsection{Results}

\paragraph{One Data Center Setup}

\begin{figure}
\centering%
  \includegraphics[width=.65\textwidth]{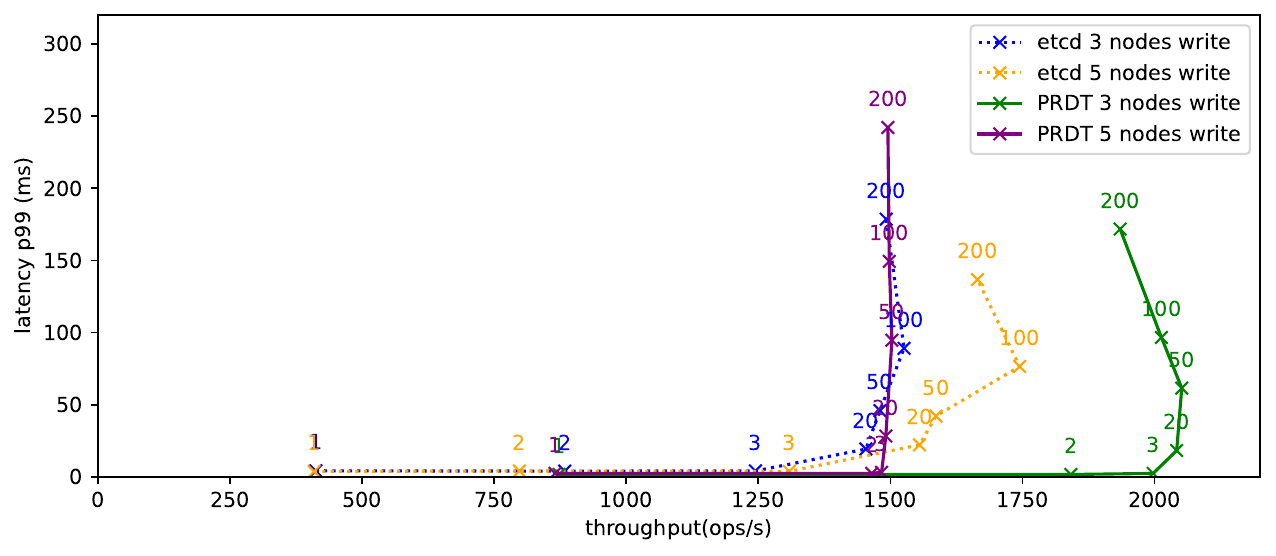} %
\caption{Throughput vs latency for a write-only workload 
in a local setup. 
Each data point is labelled with the number of client threads used to drive the benchmark.
}%
\label{fig:local_write_comparison}%
\end{figure}

\begin{figure}
\centering%
  \includegraphics[width=.65\textwidth]{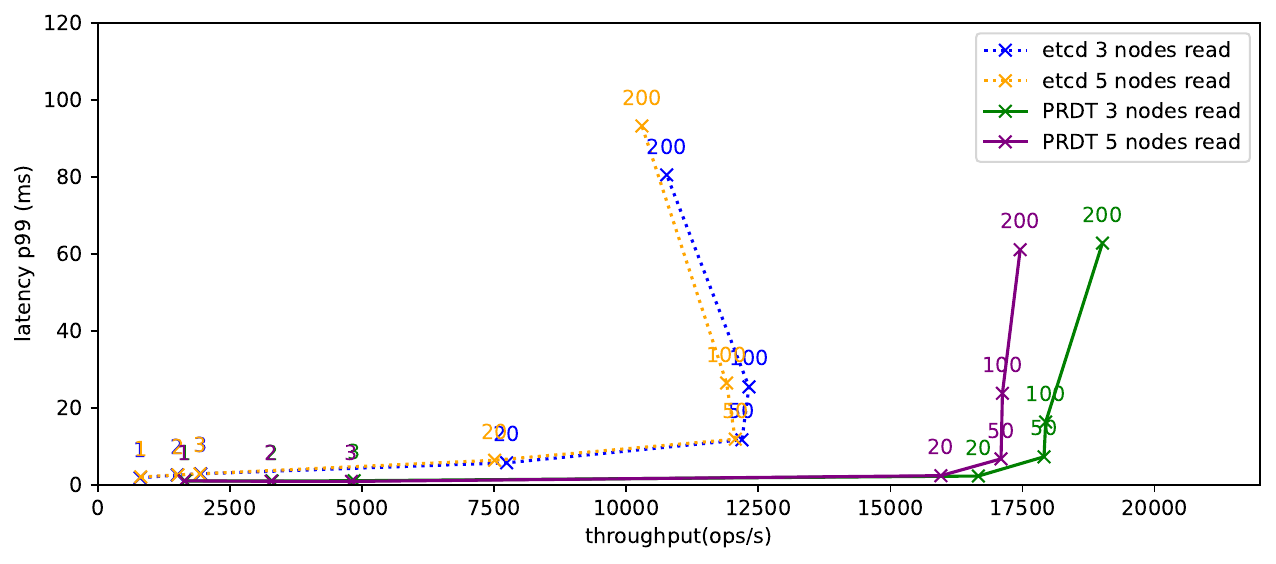} %
\caption{Throughput vs latency for a read-mostly workload (95/5) 
in a local setup.
Each data point is labelled with the number of client threads7used to drive the benchmark.
}%
\label{fig:local_read_comparison}%
\end{figure}

\Cref{fig:local_write_comparison} presents the results for the write-only workload. 
Tables displaying the results in more detail can be found in Appendix~\ref{sec:bench-data}.
With an intra-data centre round trip latency of 0.4 - 0.5 milliseconds, the theoretical maximum throughput is bound to approximately 2,000 - 2,500 requests per second (since write requests need to be processed sequentially).
Both systems approach this limit, indicating that network latency remains the dominant bottleneck, even within a local network.

In the 3-server configuration, our PRDT-based implementation outperforms etcd in both throughput and latency. We attribute this to the relative simplicity of our implementation: etcd's broader feature set likely introduces a modest overhead that becomes visible at this scale.
Scaling to 5 servers reveals a divergence: etcd’s performance improves while the PRDT implementation decreases. 
We hypothesize that etcd makes better use of the larger node pool by exploiting subtle latency differences between servers. 
Consensus throughput is bound by the lowest round-trip latency achievable among any (majority) quorum of nodes, 
and more nodes increase the chances of finding a faster quorum. 
The PRDT implementation, by contrast, is penalized by its simple delta dissemination strategy, which assigns equal priority to all messages regardless of their role in protocol progress. With 5 nodes, the total message volume increases, which presumably delays delivery to the leader -- the critical path for consensus progress -- and consequently reduces throughput.

\cref{fig:local_read_comparison} shows results for the read-mostly workload. 
As expected, both systems achieve significantly higher throughput and lower latency than in the write-only case: 
Under regular heartbeat communication, read requests are served directly by the leader without requiring a full consensus round.
Our implementation outperforms etcd for both 3 and 5 nodes. For etcd, the difference between the two configurations is negligible, whereas the PRDT implementation retains a slight advantage in the 3-node case. Since the read-mostly workload still includes 5\% write requests that must pass through consensus, we attribute this to the same effect observed in the write-only benchmark -- the relative impact is simply less pronounced due to the lower write fraction.	

\paragraph{One Data Center with Leader-Failure}

\begin{figure}
\centering%
  \includegraphics[width=.70\textwidth]{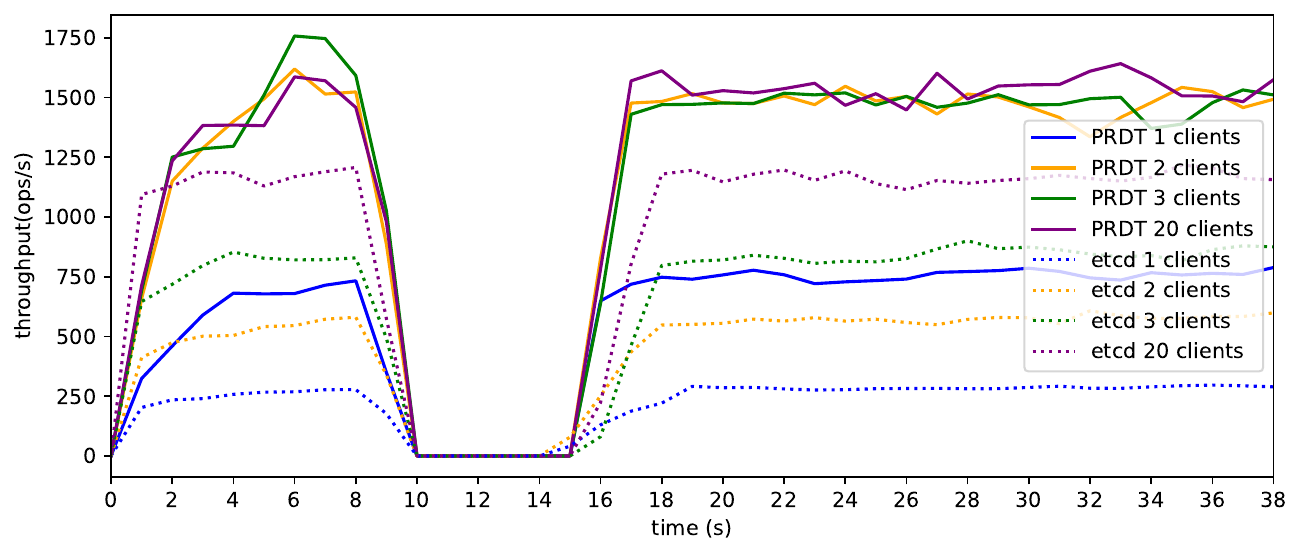}
\caption{Leader failure scenario with 3 servers and different numbers of clients. We terminate the leader after 10 seconds of runtime. Both systems are configured with a 5-second leader timeout.}%
\label{fig:benchmark_leader_failure}%
\end{figure}

Our second data centre experiment evaluates fault tolerance. We configure a 3-server cluster with different numbers of client threads and forcibly terminate the leader node after 10 seconds of runtime. This simulates a crash or network partition where the leader becomes unavailable, causing a temporary halt in progress. For this experiment, both systems are configured with a 5-second leader timeout.
\Cref{fig:benchmark_leader_failure} visualizes the throughput over time. As expected, both systems show a sharp drop to zero throughput immediately following the leader's termination. After approximately 5 seconds (matching the timeout configuration), both systems successfully elect a new leader and resume processing.
In our experiments, etcd took slightly longer than our implementation to recover full throughput. This may partly reflect our client configuration. We used an operation timeout of 1 second for both systems and left all other parameters at their defaults. Tuning these values could likely reduce etcd's recovery time.

\paragraph{Geo-Replicated Setup}

\begin{figure}
\begin{subfigure}{0.40\textwidth}%
\centering%
  \includegraphics[width=\textwidth]{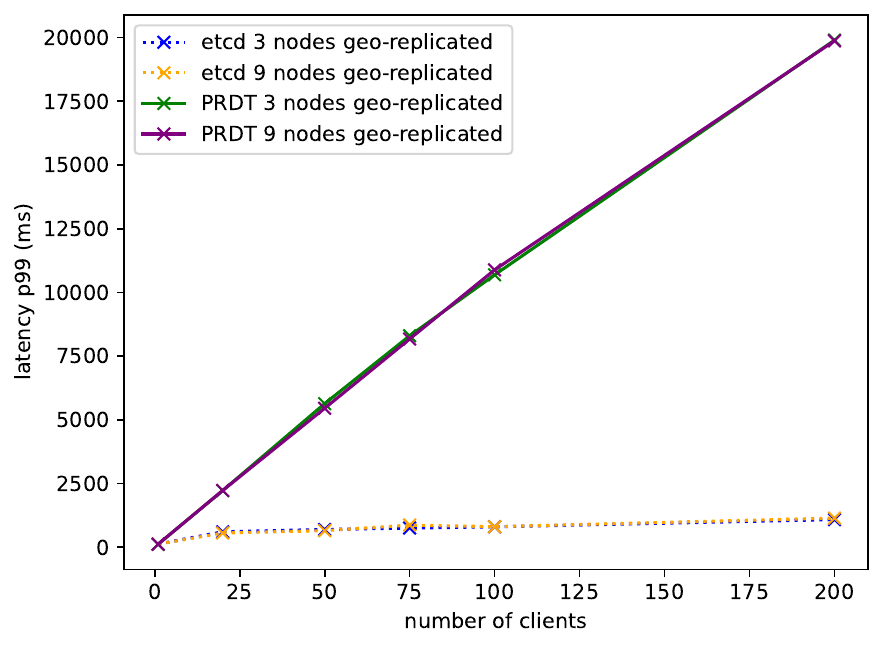} %
\end{subfigure}%
\begin{subfigure}{0.40\textwidth}%
\centering%
  \includegraphics[width=\textwidth]{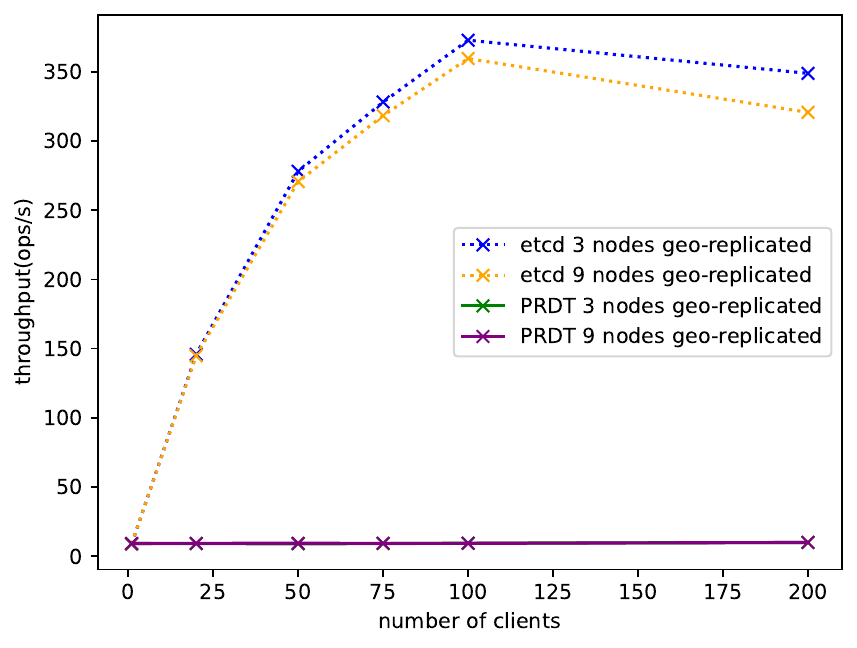} %
\end{subfigure}%
\caption{Latency and throughput with different numbers of clients in a geo-replicated setup with 3 servers placed in Germany (leader), Singapore and U.S. east coast. The clients are located on a different machine in the same data centre as the leader.}%
\label{fig:dist_comparison}%
\end{figure}

In our geo-replicated setup, the benchmark driver was colocated with the leader (on a separate machine in the same data centre) in Germany to isolate consensus latency. %
The read performance in this setting is the same as in the purely local setup, therefore we only report results for the write-only workload.

\Cref{fig:dist_comparison} visualizes throughput and latency in relation to the number of concurrent client requests.
In this high-latency setting, our implementation's performance is dominated by network transit times. Notably, we observed no significant performance difference between the 3-node and 9-node setups -- for throughput the PRDT 3-node graph overlays with the 9-node graph.

In both cases, the leader must communicate with at least one remote region to form a quorum, and this inter-region latency masks the overhead of coordinating additional nodes.

For a single client, our average latency was 120 ms, which aligns with the RTT between Germany and the U.S. This confirms that the consensus speed is bound by the fastest round-trip to a quorum (in this case, involving nodes in Germany and the U.S.).
However, as we increase client concurrency, our sequential implementation hits a bottleneck: additional clients merely extend the request queue at the leader, linearly increasing latency without improving throughput. In contrast, etcd effectively leverages pipelining: While its single-client performance is similarly latency-bound, it achieves higher throughput with multiple clients by overlapping consensus rounds, an optimization that is particularly effective here since our workload generates no conflicts that would force rollbacks.

\paragraph{Concluding Discussion}
In summary, our experiments provide strong evidence for the practical feasibility of the PRDT model.
In both local and geo-replicated settings, throughput and latency were primarily limited by the round-trip latency between consensus nodes, indicating that the PRDT abstraction and its modular composition impose no prohibitive overhead on the core consensus path that would preclude practical adoption. Our leader-failure experiment further confirms that the PRDT-based implementation exhibits the fault-tolerance properties expected of Multi-Paxos.

Visible performance differences between the two systems were confined to the geo-replicated benchmarks, where etcd benefits from request pipelining absent in our prototype. We consider such optimizations compatible with the PRDT model and real-world systems could adopt pipelining and similar strategies without abandoning the PRDT abstraction.
Furthermore, PRDTs might open additional novel optimization avenues: application-specific consensus variants can be composed directly from protocol building blocks, and deployment-specific delta dissemination strategies -- such as prioritizing traffic to the current leader -- could further improve performance. Our current prototype uses the default, generic dissemination algorithm provided by the ARDT framework.
Validating these optimization paths remains future work.

\section{Related Work}
\paragraph{Abstractions and Building Blocks for Distributed Protocols}\label{building-blocksabstractions-for-consensus}

\citet{andersenProtocolCombinators2021} propose combinators as a monadic embedded DSL in Haskell for composing protocols from simpler components. While sharing our compositional approach, their work relies on message passing rather than replicated state, and requires manual specification of monadic interpretation semantics. This adds flexibility but contrasts with our model's automatic guarantees of eventual common knowledge and simplified reasoning.
Another work focussing on composition of protocol components is \emph{QuickSilver}\cite{jaberQuickSilverModeling2021}, a modelling language and automated verifier for compositional agreement protocols. Protocols modelled with \emph{Quicksilver} use message passing, requiring a finer level of granularity than our knowledge-based model using replicated state. In addition, \emph{Quicksilver}'s focus lies in the automated verification of protocol models, which unlike PRDTs does not yield executable protocol implementations.

\citet{howardGeneralisedSolution2019} present a generalized Paxos using the abstraction of infinite write-once registers for communication. While sharing our focus on immutable replicated state but without exploring implementation or compositional aspects, they do not provide an implementation or explore opportunities for data type composition. In follow-up work~\cite{howardPaxosVs2020}, the same authors analyze similarities between Paxos and Raft, examining core principles. \citet{wangParallelsPaxos2019} present a formal mapping between Paxos and Raft, discussing similarities and showing how to port optimizations. 
These works indicate that grounding protocol construction in common core principles and reusable components, as our approach advocates, facilitates reasoning/understanding, and enable broader optimization opportunities.

\citet{boichatDeconstructingPaxos2003} provide a formal model that decomposes Paxos into leader election and register abstractions, while \citet{guptaChemistryAgreement2023} present a chemical notation to categorize agreement protocol components. Though both works identify key protocol building blocks, they remain in the message-passing paradigm and do not provide a programming model for practical protocol composition.

\paragraph{Monotonicity and Lattice-Based Reasoning}\label{monotonicity-and-lattice-based-reasoning}

Several works demonstrate the value of lattices and monotonicity in distributed systems. The \emph{CALM} theorem establishes relationships between program monotonicity and consistency requirements~\cite{amelootRelationalTransducers2013, Alvaro2011, amelootWeakerForms2015,hellersteinKeepingCALM2020, baccaertDistributedConsistency2023}. \citet{laddadKeepCALM2022} extend this to CRDTs, showing how monotonic query functions enable coordination-free replication.
We employ monotonicity in several parts of our model when we reason about the stability of preconditions and the monotonicity of the decision function (cf.~ \cref{sec:verification}). While existing work mainly emphasizes coordination freedom and eventual consistency, we demonstrate how replicated data types can implement strong consistency through eventual common knowledge.

\citet{conwayLogicLattices2012} introduce lattices as explicit language constructs for monotonic distributed programs, but leave lattice implementation correctness to programmers, unlike our ARDT-based model that can derive correct lattices through automatic composition of the merges functions of their components.
A similar idea was later explored by \citet{kuperLVarsLatticebased2013} who propose \emph{LVars}, lattice structures for parallel programming.

\emph{Derecho}~\cite{birmanInvitedPaper2023, jhaDerechoFast2019} is a consensus implementation for data centres which leverages monotonicity to speed up protocol execution.

Prior work~\cite{v.gleissenthallPretendSynchrony2019,farzanSoundSequentialization2022} has proposed various techniques to simplify reasoning about distributed systems by abstracting away message-passing complexity. 
Closest to our approach is recent work by \citet{lewchenkoBoltOnStrong2025} who propose a proof theory for consensus protocols built on top of operation-based replicated data types.
We adapt techniques from their verification approach to the PRDT model in \cref{sec:verification}.
By building on state-based replicated data types and a programming model that features integrated preconditions, we are able to reduce the necessary verification conditions to show consensus safety from four to two (see \cref{sec:proof-obligations}).
Another important difference is that while their work focuses on a proof theory that simplifies verification of one concrete protocol implementation, we describe a compositional approach that allows modular verification and recombination of multiple consensus building blocks.

\paragraph{Consistency Aware Programming Models for Distributed Systems}\label{programming-models-for-distributed-systems-and-programs}

Several languages and models integrate consistency reasoning into distributed programming~\cite{houshmand2019,Alvaro2014, Lewchenko2019,kohlerRethinkingSafeConsistency2020,Balegas2015,Balegas2018,Milano2019}, offering both fast but weak consistency and slower but strong consistency modes for operations in a program. While \citet{haasLoReProgrammingModel2024} and \citet{deporre2021} enable programmable weak consistency through replicated data types, they rely on external mechanisms with fixed semantics that are optimized towards specific use cases for strong consistency. PRDTs, in contrast, additionally offer strong consistency as replicated data types, enabling customizable protocols for diverse application requirements.

\section{Conclusion and Future Work}

We introduced \emph{protocol RDTs} (PRDTs), a programming model for composable consensus protocols.
PRDTs are an extension to the ARDT framework~\cite{kuessnerAlgebraicReplicatedData2023}, introducing new abstractions for high-level implementations of consensus protocols.
PRDTs implement a knowledge-based model of consensus, 
where local state is treated as a representation of eventual common knowledge in the system.
This relieves protocol designers from low-level concerns like message delivery and reordering, allowing them to focus their attention on the protocol logic and how to evolve the state of knowledge such that exactly one value is decided for every consensus instance.
This implementation and reasoning style is made possible by the fact that PRDTs 
automatically ensure convergence of the \enquote{knowledge state}.

We demonstrated that our model enables building complex protocols like Paxos through composition of basic consensus building blocks.
Moreover, we showed how to use composition and generalized higher-order components: 
This allows refining protocols further by lifting them to new protocols with added capabilities such as multiple decisions in a row, and to define advanced protocols by composing simpler ones. 
Finally, we demonstrated that building protocols as PRDTs does not introduce inherent unacceptable performance overhead: We did so by empirically comparing a prototype key-value store implementation based on our PRDT-based Multi-Paxos with the well-established key-value store implementation 
etcd~\cite{etcd}.
In our experiments, both systems demonstrated comparable performance.

We see two promising directions for future work.
First, the monotonicity and eventual convergence of PRDTs creates opportunities for protocol optimizations.
By taking monotonicity for granted, protocols can potentially \enquote{cut corners} and loosen certain ordering requirements, which would be necessary in a traditional message passing setting~\cite{birmanInvitedPaper2023, jhaDerechoFast2019,chuOptimizingDistributed2024}.
Second, PRDTs could enhance mixed-consistency programming models. While existing languages use CRDTs for weak consistency~\cite{haasLoReProgrammingModel2024,deporre2021} and external systems for strong consistency, PRDTs could unify both consistency levels under replicated data types.

\section{Data-Availability Statement}

We will submit our PRDT library as well as our Stainless verified protocol implementations (implemented in Scala) for artifact evaluation.
Additionally, the artifact will include our performance evaluation (Scala implementation of the key-value store, datasets of the results, analysis scripts).
All of our implementations are open source under Apache-2.0 licence and will be made available with the non-anonymized version of this work.

\bibliographystyle{ACM-Reference-Format}
\bibliography{bibliography}

@article{Almeida2018,
  title = {Delta State Replicated Data Types},
  author = {Almeida, Paulo S{\'e}rgio and Shoker, Ali and Baquero, Carlos},
  year = 2018,
  journal = {Journal of Parallel and Distributed Computing},
  volume = {111},
  eprint = {1603.01529},
  pages = {162--173},
  publisher = {Elsevier Inc.},
  doi = {10.1016/j.jpdc.2017.08.003},
  archiveprefix = {arXiv}
}

@inproceedings{Alvaro2011,
  title = {Consistency {{Analysis}} in {{Bloom}}: A {{CALM}} and {{Collected Approach}}.},
  booktitle = {{{CIDR}}},
  author = {Alvaro, Peter and Conway, Neil and Hellerstein, Joseph M and Marczak, William R},
  year = 2011,
  publisher = {Citeseer},
  url = {https://people.ucsc.edu/~palvaro/cidr11.pdf}
}

@inproceedings{Alvaro2014,
  title = {Blazes: {{Coordination}} Analysis for Distributed Programs},
  booktitle = {2014 {{IEEE}} 30th {{International Conference}} on {{Data Engineering}}},
  author = {Alvaro, Peter and Conway, Neil and Hellerstein, Joseph M and Maier, David},
  year = 2014,
  month = mar,
  pages = {52--63},
  publisher = {IEEE},
  doi = {10.1109/ICDE.2014.6816639}
}

@article{amelootRelationalTransducers2013,
  title = {Relational Transducers for Declarative Networking},
  author = {Ameloot, Tom J. and Neven, Frank and Van Den Bussche, Jan},
  year = 2013,
  month = may,
  journal = {Journal of the ACM},
  volume = {60},
  number = {2},
  pages = {15:1--15:38},
  doi = {10.1145/2450142.2450151},
  urldate = {2024-03-28}
}

@article{amelootWeakerForms2015,
  title = {Weaker {{Forms}} of {{Monotonicity}} for {{Declarative Networking}}: {{A More Fine-Grained Answer}} to the {{CALM-Conjecture}}},
  shorttitle = {Weaker {{Forms}} of {{Monotonicity}} for {{Declarative Networking}}},
  author = {Ameloot, Tom J. and Ketsman, Bas and Neven, Frank and Zinn, Daniel},
  year = 2015,
  month = dec,
  journal = {ACM Transactions on Database Systems},
  volume = {40},
  number = {4},
  pages = {21:1--21:45},
  doi = {10.1145/2809784},
  urldate = {2024-01-26},
  annotation = {QID: Q123285991}
}

@article{andersenProtocolCombinators2021,
  title = {Protocol Combinators for Modeling, Testing, and Execution of Distributed Systems},
  author = {Andersen, Kristoffer Just Arndal and Sergey, Ilya},
  year = 2021,
  month = jan,
  journal = {Journal of Functional Programming},
  volume = {31},
  pages = {e3},
  doi = {10.1017/S095679682000026X},
  urldate = {2024-11-07},
  langid = {english}
}

@inproceedings{baccaertDistributedConsistency2023,
  title = {Distributed {{Consistency Beyond Queries}}},
  booktitle = {Proceedings of the 42nd {{ACM SIGMOD-SIGACT-SIGAI Symposium}} on {{Principles}} of {{Database Systems}}},
  author = {Baccaert, Tim and Ketsman, Bas},
  year = 2023,
  month = jun,
  series = {{{PODS}} '23},
  pages = {47--58},
  publisher = {Association for Computing Machinery},
  address = {New York, NY, USA},
  doi = {10.1145/3584372.3588657},
  urldate = {2024-11-09}
}

@article{Balegas2015,
  title = {Putting Consistency Back into Eventual Consistency},
  author = {Balegas, Valter and Duarte, S{\'e}rgio and Ferreira, Carla and Rodrigues, Rodrigo and Pregui{\c c}a, Nuno and Najafzadeh, Mahsa and Shapiro, Marc},
  year = 2015,
  journal = {Proceedings of the Tenth European Conference on Computer Systems - EuroSys '15},
  pages = {1--16},
  doi = {10.1145/2741948.2741972}
}

@article{Balegas2018,
  title = {{{IPA}}: {{Invariant-Preserving Applications}} for {{Weakly Consistent Replicated Databases}}},
  author = {Balegas, Valter and Duarte, S{\'e}rgio and Ferreira, Carla and Rodrigues, Rodrigo and Pregui{\c c}a, Nuno},
  year = 2018,
  month = dec,
  journal = {Proceedings of the VLDB Endowment},
  volume = {12},
  number = {4},
  pages = {404--418},
  publisher = {VLDB Endowment},
  doi = {10.14778/3297753.3297760}
}

@inproceedings{birmanInvitedPaper2023,
  title = {Invited {{Paper}}: {{Monotonicity}} and {{Opportunistically-Batched Actions}} in {{Derecho}}},
  shorttitle = {Invited {{Paper}}},
  booktitle = {Stabilization, {{Safety}}, and {{Security}} of {{Distributed Systems}}},
  author = {Birman, Ken and Jha, Sagar and Milano, Mae and Rosa, Lorenzo and Song, Weijia and Tremel, Edward},
  year = 2023,
  pages = {172--190},
  publisher = {Springer Nature Switzerland},
  address = {Cham},
  doi = {10.1007/978-3-031-44274-2_14},
  langid = {english}
}

@article{boichatDeconstructingPaxos2003,
  title = {Deconstructing Paxos},
  author = {Boichat, Romain and Dutta, Partha and Fr{\o}lund, Svend and Guerraoui, Rachid},
  year = 2003,
  month = mar,
  journal = {SIGACT News},
  volume = {34},
  number = {1},
  pages = {47--67},
  doi = {10.1145/637437.637447},
  urldate = {2024-07-31}
}

@article{Brewer2012,
  title = {{{CAP}} Twelve Years Later: {{How}} the "Rules" Have Changed},
  author = {Brewer, Eric A.},
  year = 2012,
  journal = {Computer},
  volume = {45},
  number = {2},
  pages = {23--29},
  doi = {10.1109/MC.2012.37}
}

@book{cachinIntroductionReliable2011,
  title = {Introduction to {{Reliable}} and {{Secure Distributed Programming}}},
  author = {Cachin, Christian and Guerraoui, Rachid and Rodrigues, Lu{\'i}s},
  year = 2011,
  publisher = {Springer Berlin Heidelberg},
  address = {Berlin, Heidelberg},
  doi = {10.1007/978-3-642-15260-3},
  urldate = {2025-10-21},
  copyright = {http://www.springer.com/tdm},
  langid = {english}
}

@inproceedings{chandraPaxosMade2007,
  title = {Paxos Made Live: An Engineering Perspective},
  shorttitle = {Paxos Made Live},
  booktitle = {Proceedings of the Twenty-Sixth Annual {{ACM}} Symposium on {{Principles}} of Distributed Computing},
  author = {Chandra, Tushar D. and Griesemer, Robert and Redstone, Joshua},
  year = 2007,
  month = aug,
  series = {{{PODC}} '07},
  pages = {398--407},
  publisher = {Association for Computing Machinery},
  address = {New York, NY, USA},
  doi = {10.1145/1281100.1281103},
  urldate = {2024-08-08},
  annotation = {QID: Q57635744}
}

@article{chuOptimizingDistributed2024,
  title = {Optimizing {{Distributed Protocols}} with {{Query Rewrites}}},
  author = {Chu, David C.Y. and Panchapakesan, Rithvik and Laddad, Shadaj and Katahanas, Lucky E. and Liu, Chris and Shivakumar, Kaushik and Crooks, Natacha and Hellerstein, Joseph M. and Howard, Heidi},
  year = 2024,
  month = mar,
  journal = {Proceedings of the ACM on Management of Data},
  volume = {2},
  number = {1},
  pages = {2:1--2:25},
  doi = {10.1145/3639257},
  urldate = {2024-05-16},
  langid = {english}
}

@inproceedings{conwayLogicLattices2012,
  title = {Logic and Lattices for Distributed Programming},
  booktitle = {Proceedings of the {{Third ACM Symposium}} on {{Cloud Computing}}},
  author = {Conway, Neil and Marczak, William R. and Alvaro, Peter and Hellerstein, Joseph M. and Maier, David},
  year = 2012,
  month = oct,
  series = {{{SoCC}} '12},
  pages = {1--14},
  publisher = {Association for Computing Machinery},
  address = {New York, NY, USA},
  doi = {10.1145/2391229.2391230},
  urldate = {2023-12-21}
}

@inproceedings{cooperBenchmarkingCloud2010,
  title = {Benchmarking Cloud Serving Systems with {{YCSB}}},
  booktitle = {Proceedings of the 1st {{ACM}} Symposium on {{Cloud}} Computing},
  author = {Cooper, Brian F. and Silberstein, Adam and Tam, Erwin and Ramakrishnan, Raghu and Sears, Russell},
  year = 2010,
  month = jun,
  series = {{{SoCC}} '10},
  pages = {143--154},
  publisher = {Association for Computing Machinery},
  address = {New York, NY, USA},
  doi = {10.1145/1807128.1807152},
  urldate = {2026-01-31}
}

@article{deporre2021,
  title = {{{ECROs}}: Building Global Scale Systems from Sequential Code},
  shorttitle = {{{ECROs}}},
  author = {De Porre, Kevin and Ferreira, Carla and Pregui{\c c}a, Nuno and Gonzalez Boix, Elisa},
  year = 2021,
  month = oct,
  journal = {Proceedings of the ACM on Programming Languages},
  volume = {5},
  number = {OOPSLA},
  pages = {107:1--107:30},
  doi = {10.1145/3485484},
  urldate = {2022-04-06}
}

@misc{epfllaraStainlesFormal,
  title = {Stainles: {{Formal Verification}} for {{Scala}}},
  author = {{EPFL LARA}},
  year = 2025,
  url = {https://epfl-lara.github.io/stainless/},
  urldate = {2025-09-29},
  note = {(Accessed on 2025-09-29)}
}

@misc{etcd,
  title = {Etcd},
  author = {{etcd authors}},
  year = 2024,
  url = {https://etcd.io/},
  urldate = {2025-03-25},
  note = {(Accessed on 2025-03-25)}
}

@inproceedings{farzanSoundSequentialization2022,
  title = {Sound Sequentialization for Concurrent Program Verification},
  booktitle = {Proceedings of the 43rd {{ACM SIGPLAN International Conference}} on {{Programming Language Design}} and {{Implementation}}},
  author = {Farzan, Azadeh and Klumpp, Dominik and Podelski, Andreas},
  year = 2022,
  month = jun,
  series = {{{PLDI}} 2022},
  pages = {506--521},
  publisher = {Association for Computing Machinery},
  address = {New York, NY, USA},
  doi = {10.1145/3519939.3523727},
  urldate = {2024-02-09}
}

@article{fischerImpossibilityDistributed1985,
  title = {Impossibility of Distributed Consensus with One Faulty Process},
  author = {Fischer, Michael J. and Lynch, Nancy A. and Paterson, Michael S.},
  year = 1985,
  month = apr,
  journal = {J. ACM},
  volume = {32},
  number = {2},
  pages = {374--382},
  doi = {10.1145/3149.214121},
  urldate = {2024-08-08},
  annotation = {QID: Q55951797}
}

@inproceedings{guptaChemistryAgreement2023,
  title = {Chemistry behind {{Agreement}}},
  booktitle = {13th {{Annual Conference}} on {{Innovative Data Systems Research}} ({{CIDR}} '23)},
  author = {Gupta, Suyash and Amiri, Mohammad Javad and Sadoghi, Mohammad},
  year = 2023,
  url = {https://www.cidrdb.org/cidr2023/papers/p85-gupta.pdf},
  langid = {english}
}

@article{haasLoReProgrammingModel2024,
  title = {{{LoRe}}: {{A Programming Model}} for {{Verifiably Safe Local-first Software}}},
  shorttitle = {{{LoRe}}},
  author = {Haas, Julian and Mogk, Ragnar and Yanakieva, Elena and Bieniusa, Annette and Mezini, Mira},
  year = 2024,
  month = jan,
  journal = {ACM Transactions on Programming Languages and Systems},
  volume = {46},
  number = {1},
  pages = {2:1--2:26},
  doi = {10.1145/3633769},
  urldate = {2024-01-19}
}

@article{halpernKnowledgeCommon1990,
  title = {Knowledge and Common Knowledge in a Distributed Environment},
  author = {Halpern, Joseph Y. and Moses, Yoram},
  year = 1990,
  month = jul,
  journal = {Journal of the ACM},
  volume = {37},
  number = {3},
  pages = {549--587},
  doi = {10.1145/79147.79161},
  urldate = {2024-03-20},
  annotation = {QID: Q30051985}
}

@article{hamzaSystemFR2019,
  title = {System {{FR}}: Formalized Foundations for the Stainless Verifier},
  shorttitle = {System {{FR}}},
  author = {Hamza, Jad and Voirol, Nicolas and Kun{\v c}ak, Viktor},
  year = 2019,
  month = oct,
  journal = {Proceedings of the ACM on Programming Languages},
  volume = {3},
  number = {OOPSLA},
  pages = {166:1--166:30},
  doi = {10.1145/3360592},
  urldate = {2023-07-10}
}

@inproceedings{hawblitzelIronFleetProving2015,
  title = {{{IronFleet}}: Proving Practical Distributed Systems Correct},
  shorttitle = {{{IronFleet}}},
  booktitle = {Proceedings of the 25th {{Symposium}} on {{Operating Systems Principles}}},
  author = {Hawblitzel, Chris and Howell, Jon and Kapritsos, Manos and Lorch, Jacob R. and Parno, Bryan and Roberts, Michael L. and Setty, Srinath and Zill, Brian},
  year = 2015,
  month = oct,
  series = {{{SOSP}} '15},
  pages = {1--17},
  publisher = {Association for Computing Machinery},
  address = {New York, NY, USA},
  doi = {10.1145/2815400.2815428},
  urldate = {2023-01-02}
}

@article{hellersteinKeepingCALM2020,
  title = {Keeping {{CALM}}: When Distributed Consistency Is Easy},
  shorttitle = {Keeping {{CALM}}},
  author = {Hellerstein, Joseph M. and Alvaro, Peter},
  year = 2020,
  month = aug,
  journal = {Commun. ACM},
  volume = {63},
  number = {9},
  pages = {72--81},
  doi = {10.1145/3369736},
  urldate = {2024-11-14}
}

@article{houshmand2019,
  title = {Hamsaz: {{Replication Coordination Analysis}} and {{Synthesis}}},
  author = {Houshmand, Farzin and Lesani, Mohsen},
  year = 2019,
  month = jan,
  journal = {Proceedings of the ACM on Programming Languages},
  volume = {3},
  number = {POPL},
  pages = {74:1--74:32},
  publisher = {ACM},
  address = {New York, NY, USA},
  doi = {10.1145/3290387}
}

@misc{howardGeneralisedSolution2019,
  title = {A {{Generalised Solution}} to {{Distributed Consensus}}},
  author = {Howard, Heidi and Mortier, Richard},
  year = 2019,
  month = feb,
  number = {arXiv:1902.06776},
  eprint = {1902.06776},
  primaryclass = {cs},
  publisher = {arXiv},
  doi = {10.48550/arXiv.1902.06776},
  urldate = {2024-06-24},
  archiveprefix = {arXiv}
}

@inproceedings{howardPaxosVs2020,
  title = {Paxos vs {{Raft}}: Have We Reached Consensus on Distributed Consensus?},
  shorttitle = {Paxos vs {{Raft}}},
  booktitle = {Proceedings of the 7th {{Workshop}} on {{Principles}} and {{Practice}} of {{Consistency}} for {{Distributed Data}}},
  author = {Howard, Heidi and Mortier, Richard},
  year = 2020,
  month = apr,
  series = {{{PaPoC}} '20},
  pages = {1--9},
  publisher = {Association for Computing Machinery},
  address = {New York, NY, USA},
  doi = {10.1145/3380787.3393681},
  urldate = {2024-05-16}
}

@article{howardRaftRefloated2015,
  title = {Raft {{Refloated}}: {{Do We Have Consensus}}?},
  shorttitle = {Raft {{Refloated}}},
  author = {Howard, Heidi and Schwarzkopf, Malte and Madhavapeddy, Anil and Crowcroft, Jon},
  year = 2015,
  month = jan,
  journal = {SIGOPS Oper. Syst. Rev.},
  volume = {49},
  number = {1},
  pages = {12--21},
  doi = {10.1145/2723872.2723876},
  urldate = {2024-08-07},
  annotation = {QID: Q57352399}
}

@inproceedings{huntZooKeeperWaitfree2010,
  title = {{{ZooKeeper}}: {{Wait-free Coordination}} for {{Internet-scale Systems}}},
  shorttitle = {\textbraceleft{{ZooKeeper}}\textbraceright},
  booktitle = {2010 {{USENIX Annual Technical Conference}} ({{USENIX ATC}} 10)},
  author = {Hunt, Patrick and Konar, Mahadev and Junqueira, Flavio P. and Reed, Benjamin},
  year = 2010,
  url = {https://www.usenix.org/conference/usenix-atc-10/zookeeper-wait-free-coordination-internet-scale-systems},
  urldate = {2024-02-03},
  langid = {english}
}

@article{jaberQuickSilverModeling2021,
  title = {{{QuickSilver}}: Modeling and Parameterized Verification for Distributed Agreement-Based Systems},
  shorttitle = {{{QuickSilver}}},
  author = {Jaber, Nouraldin and Wagner, Christopher and Jacobs, Swen and Kulkarni, Milind and Samanta, Roopsha},
  year = 2021,
  month = oct,
  journal = {QuickSilver},
  volume = {5},
  number = {OOPSLA},
  pages = {157:1--157:31},
  doi = {10.1145/3485534},
  urldate = {2024-12-03}
}

@article{jhaDerechoFast2019,
  title = {Derecho: {{Fast State Machine Replication}} for {{Cloud Services}}},
  shorttitle = {Derecho},
  author = {Jha, Sagar and Behrens, Jonathan and Gkountouvas, Theo and Milano, Mae and Song, Weijia and Tremel, Edward and Renesse, Robbert Van and Zink, Sydney and Birman, Kenneth P.},
  year = 2019,
  month = apr,
  journal = {ACM Transactions on Computer Systems},
  volume = {36},
  number = {2},
  pages = {4:1--4:49},
  doi = {10.1145/3302258},
  urldate = {2024-03-12}
}

@article{kohlerRethinkingSafeConsistency2020,
  title = {Rethinking {{Safe Consistency}} in {{Distributed Object-Oriented Programming}}},
  author = {K{\"o}hler, Mirko and Eskandani, Nafise and Weisenburger, Pascal and Margara, Alessandro and Salvaneschi, Guido},
  year = 2020,
  journal = {Proceedings of the ACM on Programming Languages},
  volume = {4},
  number = {OOPSLA},
  pages = {188:1--188:30},
  doi = {10.1145/3428256},
  bibsource = {dblp computer science bibliography, https://dblp.org},
  biburl = {https://dblp.org/rec/journals/pacmpl/KohlerEWMS20.bib},
  timestamp = {Fri, 11 Dec 2020 14:27:07 +0100}
}

@article{kotlaZyzzyvaSpeculative2010,
  title = {Zyzzyva: {{Speculative Byzantine}} Fault Tolerance},
  shorttitle = {Zyzzyva},
  author = {Kotla, Ramakrishna and Alvisi, Lorenzo and Dahlin, Mike and Clement, Allen and Wong, Edmund},
  year = 2010,
  month = jan,
  journal = {ACM Trans. Comput. Syst.},
  volume = {27},
  number = {4},
  pages = {7:1--7:39},
  doi = {10.1145/1658357.1658358},
  urldate = {2024-08-22}
}

@inproceedings{kuessnerAlgebraicReplicatedData2023,
  title = {Algebraic {{Replicated Data Types}}: {{Programming Secure Local-First Software}}},
  shorttitle = {Algebraic {{Replicated Data Types}}},
  booktitle = {37th {{European Conference}} on {{Object-Oriented Programming}} ({{ECOOP}} 2023)},
  author = {Kuessner, Christian and Mogk, Ragnar and Wickert, Anna-Katharina and Mezini, Mira},
  year = 2023,
  series = {Leibniz International Proceedings in Informatics ({{LIPIcs}})},
  volume = {263},
  pages = {14:1--14:33},
  publisher = {Schloss Dagstuhl -- Leibniz-Zentrum f\"ur Informatik},
  address = {Dagstuhl, Germany},
  doi = {10.4230/LIPIcs.ECOOP.2023.14},
  langid = {english},
  urn = {urn:nbn:de:0030-drops-182076}
}

@inproceedings{kuperLVarsLatticebased2013,
  title = {{{LVars}}: Lattice-Based Data Structures for Deterministic Parallelism},
  shorttitle = {{{LVars}}},
  booktitle = {Proceedings of the 2nd {{ACM SIGPLAN}} Workshop on {{Functional}} High-Performance Computing},
  author = {Kuper, Lindsey and Newton, Ryan R.},
  year = 2013,
  month = sep,
  series = {{{FHPC}} '13},
  pages = {71--84},
  publisher = {Association for Computing Machinery},
  address = {New York, NY, USA},
  doi = {10.1145/2502323.2502326},
  urldate = {2024-04-17}
}

@article{laddadKeepCALM2022,
  title = {Keep {{CALM}} and {{CRDT On}}},
  author = {Laddad, Shadaj and Power, Conor and Milano, Mae and Cheung, Alvin and Crooks, Natacha and Hellerstein, Joseph M.},
  year = 2022,
  month = dec,
  journal = {Proceedings of the VLDB Endowment},
  volume = {16},
  number = {4},
  pages = {856--863},
  doi = {10.14778/3574245.3574268},
  urldate = {2025-03-17},
  langid = {english}
}

@article{lamportFastPaxos2006,
  title = {Fast {{Paxos}}},
  author = {Lamport, Leslie},
  year = 2006,
  month = oct,
  journal = {Distributed Computing},
  volume = {19},
  number = {2},
  pages = {79--103},
  doi = {10.1007/s00446-006-0005-x},
  urldate = {2024-07-30},
  langid = {english}
}

@techreport{lamportGeneralizedConsensus2005,
  title = {Generalized {{Consensus}} and {{Paxos}}},
  author = {Lamport, Leslie},
  year = 2005,
  month = mar,
  url = {https://www.microsoft.com/en-us/research/wp-content/uploads/2016/02/tr-2005-33.pdf},
  urldate = {2024-02-15}
}

@article{lamportParttimeParliament1998,
  title = {The Part-Time Parliament},
  author = {Lamport, Leslie},
  year = 1998,
  month = may,
  journal = {ACM Transactions on Computer Systems},
  volume = {16},
  number = {2},
  pages = {133--169},
  doi = {10.1145/279227.279229},
  urldate = {2024-02-03},
  annotation = {QID: Q55895338}
}

@article{lamportPaxosMade2001,
  title = {Paxos {{Made Simple}}},
  author = {Lamport, Leslie},
  year = 2001,
  month = dec,
  journal = {ACM SIGACT News (Distributed Computing Column) 32, 4 (Whole Number 121, December 2001)},
  pages = {51--58},
  url = {https://www.microsoft.com/en-us/research/publication/paxos-made-simple/},
  urldate = {2021-12-06},
  langid = {american}
}

@article{lamportReconfiguringState2010,
  title = {Reconfiguring a State Machine},
  author = {Lamport, Leslie and Malkhi, Dahlia and Zhou, Lidong},
  year = 2010,
  month = mar,
  journal = {ACM SIGACT News},
  volume = {41},
  number = {1},
  pages = {63--73},
  doi = {10.1145/1753171.1753191},
  urldate = {2024-10-31},
  langid = {english}
}

@article{Lewchenko2019,
  title = {Sequential {{Programming}} for {{Replicated Data Stores}}},
  author = {Lewchenko, Nicholas V and Radhakrishna, Arjun and Gaonkar, Akash and {\v C}ern{\'y}, Pavol},
  year = 2019,
  month = jul,
  journal = {Proceedings of the ACM on Programming Languages},
  volume = {3},
  number = {ICFP},
  pages = {106:1--106:28},
  publisher = {ACM},
  address = {New York, NY, USA},
  doi = {10.1145/3341710}
}

@article{lewchenkoBoltOnStrong2025,
  title = {Bolt-{{On Strong Consistency}}: {{Specification}}, {{Implementation}}, and {{Verification}}},
  shorttitle = {Bolt-{{On Strong Consistency}}},
  author = {Lewchenko, Nicholas V. and Kaki, Gowtham and Chang, Bor-Yuh Evan},
  year = 2025,
  month = apr,
  journal = {Artifact for Bolt-On Strong Consistency: Specification, Implementation, and Verification},
  volume = {9},
  number = {OOPSLA1},
  pages = {137:1604--137:1631},
  doi = {10.1145/3720502},
  urldate = {2025-06-24}
}

@article{maiyyaUnifyingConsensus2019,
  title = {Unifying Consensus and Atomic Commitment for Effective Cloud Data Management},
  author = {Maiyya, Sujaya and Nawab, Faisal and Agrawal, Divyakant and Abbadi, Amr El},
  year = 2019,
  month = jan,
  journal = {Proceedings of the VLDB Endowment},
  volume = {12},
  number = {5},
  pages = {611--623},
  doi = {10.14778/3303753.3303765},
  urldate = {2024-05-10},
  langid = {english}
}

@inproceedings{maoMenciusBuilding2008,
  title = {Mencius: {{Building}} Efficient Replicated State Machines for {{WANs}}},
  booktitle = {8th {{USENIX}} Symposium on Operating Systems Design and Implementation ({{OSDI}} 08)},
  author = {Mao, Yanhua and Junqueira, Flavio P. and Marzullo, Keith},
  year = 2008,
  month = dec,
  publisher = {USENIX Association},
  address = {San Diego, CA},
  url = {https://www.usenix.org/conference/osdi-08/mencius-building-efficient-replicated-state-machines-wans}
}

@inproceedings{mcmillanDeductiveVerification2018,
  title = {Deductive {{Verification}} in {{Decidable Fragments}} with {{Ivy}}},
  booktitle = {Static {{Analysis}}},
  author = {McMillan, Kenneth L. and Padon, Oded},
  year = 2018,
  series = {Lecture {{Notes}} in {{Computer Science}}},
  pages = {43--55},
  publisher = {Springer International Publishing},
  address = {Cham},
  doi = {10.1007/978-3-319-99725-4_4},
  langid = {english}
}

@inproceedings{mcmillanIvyMultimodal2020,
  title = {Ivy: {{A Multi-modal Verification Tool}} for {{Distributed Algorithms}}},
  shorttitle = {Ivy},
  booktitle = {Computer {{Aided Verification}}},
  author = {McMillan, Kenneth L. and Padon, Oded},
  year = 2020,
  series = {Lecture {{Notes}} in {{Computer Science}}},
  pages = {190--202},
  publisher = {Springer International Publishing},
  address = {Cham},
  doi = {10.1007/978-3-030-53291-8_12},
  langid = {english}
}

@inproceedings{Milano2019,
  title = {A {{Tour}} of {{Gallifrey}}, a {{Language}} for {{Geodistributed Programming}}},
  booktitle = {3rd {{Summit}} on {{Advances}} in {{Programming Languages}} ({{SNAPL}} 2019)},
  author = {Milano, Mae and Recto, Rolph and Magrino, Tom and Myers, Andrew C},
  year = 2019,
  volume = {136},
  pages = {11:1--11:19},
  publisher = {Schloss Dagstuhl--Leibniz-Zentrum fuer Informatik},
  address = {Dagstuhl, Germany},
  doi = {10.4230/LIPIcs.SNAPL.2019.11}
}

@inproceedings{nagasamudramVerifyingImplementation2024,
  title = {Verifying a~{{C Implementation}} of~{{Derecho}}'s {{Coordination Mechanism Using VST}} and~{{Coq}}},
  booktitle = {{{NASA Formal Methods}}},
  author = {Nagasamudram, Ramana and Beringer, Lennart and Birman, Ken and Milano, Mae and Naumann, David A.},
  year = 2024,
  pages = {99--117},
  publisher = {Springer Nature Switzerland},
  address = {Cham},
  doi = {10.1007/978-3-031-60698-4_6},
  langid = {english}
}

@inproceedings{nawabDPaxosManaging2018,
  title = {{{DPaxos}}: {{Managing Data Closer}} to {{Users}} for {{Low-Latency}} and {{Mobile Applications}}},
  shorttitle = {{{DPaxos}}},
  booktitle = {Proceedings of the 2018 {{International Conference}} on {{Management}} of {{Data}}},
  author = {Nawab, Faisal and Agrawal, Divyakant and El Abbadi, Amr},
  year = 2018,
  month = may,
  pages = {1221--1236},
  publisher = {ACM},
  address = {Houston TX USA},
  doi = {10.1145/3183713.3196928},
  urldate = {2024-05-15},
  langid = {english}
}

@inproceedings{ongaroSearchUnderstandable2014,
  title = {In Search of an Understandable Consensus Algorithm},
  booktitle = {2014 {{USENIX}} Annual Technical Conference ({{USENIX ATC}} 14)},
  author = {Ongaro, Diego and Ousterhout, John},
  year = 2014,
  month = jun,
  pages = {305--319},
  publisher = {USENIX Association},
  address = {Philadelphia, PA},
  url = {https://www.usenix.org/conference/atc14/technical-sessions/presentation/ongaro}
}

@article{padonPaxosMade2017,
  title = {Paxos Made {{EPR}}: Decidable Reasoning about Distributed Protocols},
  shorttitle = {Paxos Made {{EPR}}},
  author = {Padon, Oded and Losa, Giuliano and Sagiv, Mooly and Shoham, Sharon},
  year = 2017,
  month = oct,
  journal = {Proceedings of the ACM on Programming Languages},
  volume = {1},
  number = {OOPSLA},
  pages = {108:1--108:31},
  doi = {10.1145/3140568},
  urldate = {2021-06-07}
}

@article{preguica2018,
  title = {Conflict-Free {{Replicated Data Types}}: {{An Overview}}},
  author = {Pregui{\c c}a, Nuno},
  year = 2018,
  month = jun,
  journal = {ArXiv},
  doi = {10.48550/ARXIV.1806.10254},
  urldate = {2022-07-26},
  langid = {english}
}

@inproceedings{rivettiStateBased2013,
  title = {State Based {{Paxos}}},
  booktitle = {Proceedings of the {{Industrial Track}} of the 13th {{ACM}}/{{IFIP}}/{{USENIX International Middleware Conference}}},
  author = {Rivetti, Nicolo and Corsaro, Angelo},
  year = 2013,
  month = dec,
  series = {Middleware {{Industry}} '13},
  pages = {1--6},
  publisher = {Association for Computing Machinery},
  address = {New York, NY, USA},
  doi = {10.1145/2541596.2541600},
  urldate = {2024-08-08}
}

@techreport{Shapiro2011,
  type = {Research Report},
  title = {A Comprehensive Study of Convergent and Commutative Replicated Data Types},
  author = {Shapiro, Marc and Pregui{\c c}a, Nuno and Baquero, Carlos and Zawirski, Marek},
  year = 2011,
  month = jan,
  number = {RR-7506},
  pages = {50},
  institution = {Inria -- Centre Paris-Rocquencourt ; INRIA},
  url = {https://hal.inria.fr/inria-00555588},
  hal_id = {inria-00555588},
  hal_version = {v1},
  pdf = {https://hal.inria.fr/inria-00555588/file/techreport.pdf}
}

@article{v.gleissenthallPretendSynchrony2019,
  title = {Pretend Synchrony: Synchronous Verification of Asynchronous Distributed Programs},
  shorttitle = {Pretend Synchrony},
  author = {{v. Gleissenthall}, Klaus and K{\i}c{\i}, Rami G{\"o}khan and Bakst, Alexander and Stefan, Deian and Jhala, Ranjit},
  year = 2019,
  month = jan,
  journal = {Proceedings of the ACM on Programming Languages},
  volume = {3},
  number = {POPL},
  pages = {59:1--59:30},
  doi = {10.1145/3290372},
  urldate = {2022-12-29}
}

@article{vanrenessePaxosMade2015,
  title = {Paxos {{Made Moderately Complex}}},
  author = {Van Renesse, Robbert and Altinbuken, Deniz},
  year = 2015,
  month = feb,
  journal = {ACM Comput. Surv.},
  volume = {47},
  number = {3},
  pages = {42:1--42:36},
  doi = {10.1145/2673577},
  urldate = {2024-07-31}
}

@inproceedings{wangParallelsPaxos2019,
  title = {On the {{Parallels}} between {{Paxos}} and {{Raft}}, and How to {{Port Optimizations}}},
  booktitle = {Proceedings of the 2019 {{ACM Symposium}} on {{Principles}} of {{Distributed Computing}}},
  author = {Wang, Zhaoguo and Zhao, Changgeng and Mu, Shuai and Chen, Haibo and Li, Jinyang},
  year = 2019,
  month = jul,
  series = {{{PODC}} '19},
  pages = {445--454},
  publisher = {Association for Computing Machinery},
  address = {New York, NY, USA},
  doi = {10.1145/3293611.3331595},
  urldate = {2024-08-07}
}

@inproceedings{weiVCorfuCloudScale2017,
  title = {{{vCorfu}}: {{A Cloud-Scale Object Store}} on a {{Shared Log}}},
  booktitle = {14th {{USENIX}} Symposium on Networked Systems Design and Implementation ({{NSDI}} 17)},
  author = {Wei, Michael and Tai, Amy and Rossbach, Christopher J. and Abraham, Ittai and Munshed, Maithem and Dhawan, Medhavi and Stabile, Jim and Wieder, Udi and Fritchie, Scott and Swanson, Steven and Freedman, Michael J. and Malkhi, Dahlia},
  year = 2017,
  month = mar,
  pages = {35--49},
  publisher = {USENIX Association},
  address = {Boston, MA},
  url = {https://www.usenix.org/conference/nsdi17/technical-sessions/presentation/wei-michael},
  langid = {english}
}

@article{whittakerSoKGeneralized2021,
  title = {{{SoK}}: {{A Generalized Multi-Leader State Machine Replication Tutorial}}},
  shorttitle = {{{SoK}}},
  author = {Whittaker, Michael and Giridharan, Neil and Szekeres, Adriana and Hellerstein, Joseph and Stoica, Ion},
  year = 2021,
  month = sep,
  journal = {Journal of Systems Research},
  volume = {1},
  number = {1},
  doi = {10.5070/SR31154817},
  urldate = {2024-10-31}
}

@article{wilcoxVerdiFramework2015,
  title = {Verdi: {{A}} Framework for Implementing and Formally Verifying Distributed Systems},
  author = {Wilcox, James R. and Woos, Doug and Panchekha, Pavel and Tatlock, Zachary and Wang, Xi and Ernst, Michael D. and Anderson, Thomas},
  year = 2015,
  journal = {Proceedings of the 36th ACM SIGPLAN conference on programming language design and implementation},
  series = {{{PLDI}} '15},
  pages = {357--368},
  address = {Portland, OR, USA},
  doi = {10.1145/2737924.2737958}
}

\clearpage
\appendix
\section{Verifying the Paxos PRDT}
\label{sec-verifying-paxos-appendix}
\label{sec:paxos-proof}

\begin{proposition}
The Paxos PRDT presented in \cref{fig:paxos-prdt} guarantees consensus safety (\cref{def:consensus-safety}).
\end{proposition}

\begin{proof}

Following Theorem~\ref{theorem:agreement-monotonicity}, we can prove consensus safety by showing action monotonicity and stability of all preconditions.

\emph{Stability of the preconditions:}
\begin{itemize}
    \item \emph{phase1a:} The precondition of phase1a is stable because it only depends on the local replicaId which does not change.
    \item \emph{phase1b:} This precondition is stable because it checks for the existence of a vote in a \texttt{Voting} instance. We know that the map of votes (the knowledge state) for a given Voting can only grow, thus existing votes do not disappear.
    \item \emph{phase2a:} This precondition checks that a given process is the leader for a given round and that the given round has no votes in \texttt{proposals}. The former is stable assuming stability \texttt{Voting} decisions. The latter is stable because phase2a is the only protocol action that could cast a vote in an empty \texttt{proposals} voting and phase2a is bound to a concrete process ID through the leader election.
    \item \emph{phase2b:} Similar to the precondition of phase1b, this precondition checks for the existence of a vote in a \texttt{Voting} instance. Again, this is stable because the map of votes can only grow.
\end{itemize}

\emph{Action monotonicity:}
\begin{itemize}
    \item \emph{phase1a}: This protocol action casts a vote in a leader election. By definition of the decision function, this cannot change the decision because the decision function ignores leader elections.
    \item \emph{phase1b}: Analogous to phase1a.
    \item \emph{phase2a}: This protocol action casts a vote in a previously empty proposal. This could only lead to a decision if there is only one participating process.
    In this case, we can assume that the one and only process has seen all previous votes. Therefore, if there was a previous decision, \texttt{newestReceivedVal} would return that decided value and phase2a would propose it again.
    \item \emph{phase2b}: This protocol action casts a vote in a running \texttt{proposals} round.
    By the precondition of phase2b, we know that there has to be an existing vote in \texttt{proposals} and that this action is now voting for that same vote.
    Since the preconditions of phase2a and phase2b ensure that only the leader can cast a vote in an empty \texttt{proposals}, we know that the first vote will always come from some leader process $p$ and every vote after that one will be for the same value.
    We can distinguish two cases:
    \begin{enumerate}
        \item If the decision was previously \texttt{Undecided}, monotonicity holds trivially, even if executing phase2b would increase the decision.
        \item If the decision was previously \texttt{Decided(a)} for any value $a$, this means that the leader process $p$ must be aware of at least one previous vote that led to $a$. This is due to the fact that when $p$ was elected as leader (received the deltas containing the leader election votes), it would also have received the deltas for any preceding voting rounds. To become leader, $p$ must have received deltas from at least a majority of processes, meaning that these deltas must have included at least one vote for $a$ which was also voted for by a majority.

        Therefore, we can conclude that $p$ knew of the value $a$ when it started the current voting round through phase2a. As a result, it must have proposed $a$ as its \texttt{newestReceivedVal} in phase2a and phase2b can only cast a vote for $a$ again.
        Voting for a previously decided value can only ever lead to the same decision and we are done.
        By induction, this argument holds, even if there were multiple other voting rounds after the decision of $a$, as any of the leaders in these rounds would know of $a$ and would thus be forced to propose it again.
\end{enumerate}

\end{itemize}
\end{proof}

\clearpage
\section{Detailed Benchmark Data}
\label{sec:bench-data}

\subsection{Single Data Center Setup}

\begin{figure}[h]
\footnotesize
\begin{tabular}{rrrrrr}
\toprule
nodes & threads & throughput(ops/s) PRDT & throughput(ops/s) etcd & write p99 latency (ms)& write p99 latency (ms)\\
&  & & & PRDT & etcd \\
\midrule
3 & 1 & 872.86 & 412.58 & 1.84 & 4.29 \\
3 & 2 & 1841.61 & 883.59 & 1.74 & 4.07 \\
3 & 3 & 1997.22 & 1244.61 & 2.48 & 4.19 \\
3 & 4 & 1936.06 & 1251.10 & 3.19 & 5.68 \\
3 & 5 & 2027.24 & 1345.05 & 3.98 & 6.42 \\
3 & 10 & 1931.99 & 1504.89 & 9.01 & 10.13 \\
3 & 20 & 2041.75 & 1453.60 & 18.36 & 19.36 \\
3 & 50 & 2051.34 & 1480.15 & 61.60 & 45.95 \\
3 & 100 & 2012.87 & 1525.77 & 96.63 & 89.14 \\
3 & 200 & 1934.73 & 1492.02 & 171.72 & 178.57 \\
5 & 1 & 865.29 & 409.26 & 2.46 & 3.84 \\
5 & 2 & 1464.73 & 797.94 & 2.54 & 4.17 \\
5 & 3 & 1483.24 & 1308.72 & 3.26 & 3.91 \\
5 & 4 & 1430.80 & 1526.73 & 3.74 & 4.09 \\
5 & 5 & 1509.71 & 1626.84 & 4.59 & 4.86 \\
5 & 10 & 1489.34 & 1609.00 & 9.70 & 9.35 \\
5 & 20 & 1491.29 & 1554.98 & 28.36 & 22.23 \\
5 & 50 & 1502.81 & 1586.28 & 94.73 & 42.20 \\
5 & 100 & 1497.79 & 1744.51 & 149.30 & 76.56 \\
5 & 200 & 1495.19 & 1664.89 & 242.11 & 136.73 \\
\bottomrule
\end{tabular}

\caption{Write performance for a write-only workload in a data center local setup where clients and servers are colocated on different machines in the same data center.}
    
\end{figure}

\vspace{-0.1cm}

\begin{figure}[h]
\footnotesize
\begin{tabular}{rrrrrr}
\toprule
nodes & threads & throughput(ops/s) PRDT & throughput(ops/s) etcd & read p99 latency (ms) & read p99 latency (ms)\\
&  & & & PRDT & etcd \\
\midrule
3 & 1 & 1652.29 & 800.43 & 1.01 & 1.83 \\
3 & 2 & 3300.11 & 1502.72 & 1.01 & 2.58 \\
3 & 3 & 4850.97 & 1948.72 & 1.09 & 2.88 \\
3 & 4 & 6674.00 & 2439.60 & 0.76 & 3.21 \\
3 & 5 & 8366.98 & 2903.27 & 0.76 & 3.42 \\
3 & 10 & 15609.06 & 5038.69 & 0.99 & 4.07 \\
3 & 20 & 16665.33 & 7737.47 & 2.30 & 5.74 \\
3 & 50 & 17906.68 & 12190.05 & 7.31 & 11.71 \\
3 & 100 & 17938.34 & 12323.18 & 16.34 & 25.48 \\
3 & 200 & 19012.90 & 10766.04 & 62.81 & 80.52 \\
5 & 1 & 1634.17 & 812.84 & 1.06 & 2.07 \\
5 & 2 & 3278.69 & 1518.64 & 0.94 & 2.61 \\
5 & 3 & 4801.20 & 1933.43 & 0.91 & 2.89 \\
5 & 10 & 14361.88 & 5433.35 & 1.03 & 3.68 \\
5 & 20 & 15955.29 & 7509.22 & 2.40 & 6.41 \\
5 & 50 & 17093.14 & 12063.85 & 6.77 & 11.87 \\
5 & 100 & 17119.45 & 11899.00 & 23.79 & 26.46 \\
5 & 200 & 17455.47 & 10297.13 & 61.10 & 93.24 \\
\bottomrule
\end{tabular}

\caption{Read performance for a read-mostly workload (95/5) in a data center local setup where clients and servers are colocated on different machines in the same data center.}
    
\end{figure}

\subsection{Geo-Replicated Setup}

\begin{figure}[h]
\footnotesize
\begin{tabular}{rrrrrr}
\toprule
nodes & threads & throughput(ops/s) PRDT & throughput(ops/s) etcd & write p99 latency (ms)& write p99 latency (ms) \\
&  & & & PRDT & etcd \\
\midrule
3 & 1 & 9.20 & 8.74 & 114.42 & 117.16 \\
3 & 20 & 9.18 & 145.91 & 2220.08 & 600.45 \\
3 & 50 & 9.02 & 278.09 & 5629.96 & 691.64 \\
3 & 75 & 9.17 & 328.14 & 8293.72 & 748.49 \\
3 & 100 & 9.47 & 372.82 & 10686.43 & 796.51 \\
3 & 200 & 9.98 & 348.87 & 19898.15 & 1081.56 \\
9 & 1 & 9.25 & 8.83 & 114.24 & 116.39 \\
9 & 20 & 9.37 & 144.86 & 2224.29 & 554.15 \\
9 & 50 & 9.39 & 270.67 & 5449.70 & 646.18 \\
9 & 75 & 9.36 & 318.34 & 8169.58 & 863.61 \\
9 & 100 & 9.36 & 359.55 & 10882.02 & 801.57 \\
9 & 200 & 9.98 & 320.68 & 19874.84 & 1139.45 \\
\bottomrule
\end{tabular}

\caption{Write performance for a write-only workload in a geo-replicated setup with servers placed in Germany, Singapore, and U.S. east coast. For the 3 node setup, one node is placed in each data center. For the 9 node setup, 3 nodes are placed in each data center. The benchmark driver running the client threads is colocated with the leader node in the data center in Germany.}
    
\end{figure}

\end{document}